\documentclass{article}

\usepackage{arxiv}

\usepackage{amsmath,amssymb,amsthm}
\theoremstyle{plain}
\newtheorem{theorem}{Theorem}[section]

\newtheorem{corollary}[theorem]{Corollary}

\theoremstyle{definition}
\newtheorem{definition}[theorem]{Definition}

\theoremstyle{remark}
\newtheorem{remark}[theorem]{Remark}

\usepackage[utf8]{inputenc} 
\usepackage[T1]{fontenc}    
\usepackage{hyperref}       
\usepackage{url}            
\usepackage{booktabs}       
\usepackage{amsfonts}       
\usepackage{nicefrac}       
\usepackage{microtype}      
\usepackage{cleveref}       
\usepackage{lipsum}         
\usepackage{graphicx}
\usepackage{natbib}
\usepackage{doi}

\usepackage{tikz}
\usetikzlibrary{arrows.meta, positioning, shapes.geometric, shapes.symbols, fit, backgrounds, calc}
\usepackage{adjustbox}
\usepackage{multirow}
\usepackage{subcaption}

\usepackage[ruled,vlined,linesnumbered]{algorithm2e}
\SetKwInput{Input}{input}
\SetKwInput{Output}{output}
\SetKwInput{Hyper}{hyper-params}
\SetKwProg{Fn}{Procedure}{}{}
\SetKwFunction{PrimalStep}{PrimalStep}
\SetKwFunction{DualStep}{DualStep}
\SetKwFunction{Substitute}{Substitute}
\SetKwFunction{Relax}{Relax}
\SetKwFunction{TopoSort}{TopoSort}
\SetKwFunction{ReverseTopoSort}{ReverseTopoSort}
\SetKwFunction{IsActivation}{IsActivation}
\SetKwFunction{IsLinear}{IsLinear}
\SetKwFunction{Predecessors}{Predecessors}
\SetKwFunction{Concretize}{Concretize}

\makeatletter

\renewcommand{\sectionautorefname}{\S\@gobble}
\renewcommand{\subsectionautorefname}{\S\@gobble}
\renewcommand{\subsubsectionautorefname}{\S\@gobble}
\def\appendixautorefname{\S\@gobble}%

\makeatother

\newtoggle{usecomment}
\settoggle{usecomment}{false}
\newcommand{\tl}[1]{\iftoggle{usecomment}{{\color{red}{[TL]: #1}}}{}}
\newcommand{\hd}[1]{\iftoggle{usecomment}{{\color{blue}{[HD]: #1}}}{}}

\newcommand{\sota}{SoTA}

\newcommand{\ibp}{\textsc{Ibp}}
\newcommand{\zonotope}{\textsc{Zonotope}}
\newcommand{\deeppoly}{\textsc{DeepPoly}}
\newcommand{\crown}{\textsc{Crown}}
\newcommand{\alphacrown}{\textsc{$\alpha$-Crown}}
\newcommand{\sdpcrown}{\textsc{Sdp-Crown}}
\newcommand{\gcpcrown}{\textsc{Gcp-Crown}}

\newcommand{\tool}{\textsc{Clad}}

\newcommand{\eg}{\emph{e.g.}}

\title{\tool{}: Constrained Abstract Domain for Neural Network Verification}

\author{Hai Duong \\
	Department of Computer Science \\
	George Mason University \\
        Fairfax, VA, USA \\
	\And
	Thanh Le \\
	Unaffiliated \\
	Yokosuka, Japan \\
	\And
	ThanhVu Nguyen \\
	Department of Computer Science \\
	George Mason University \\
        Fairfax, VA, USA \\
}

\date{}

\renewcommand{\undertitle}{}
\renewcommand{\shorttitle}{\tool{}: Constrained Abstract Domain for Neural Network Verification}

\hypersetup{
  pdftitle={CLAD: Constrained Abstract Domain for Neural Network Verification},
  pdfauthor={Hai Duong, Thanh Le, ThanhVu Nguyen},
}

\begin{document}
\maketitle

\begin{abstract}
    Neural network verification (NNV) formally verifies that a network satisfies a specified property for all inputs within a defined region.
    Modern NNV tools employ \emph{abstract domains} to compute a sound over-approximation of the network's behavior from the given input region,
    thus the tightness of these abstractions essentially determines efficiency.
    A long line of increasingly precise domains has been developed,
    but they all describe the valid input region in the same restrictive way, \eg, an $\ell_p$-norm ball.
    A practical input region is rarely a simple $\ell_p$ ball, but rather a combination $\ell_p$ ball with additional constraints.
    Verifying a network over such a region with existing abstraction produces a loose over-approximation,
    which results in either failing to verify a property or spurious counterexamples.
    We introduce \emph{\underline{C}onstrained \underline{L}agrangian \underline{A}bstract \underline{D}omain} (\tool{}),
    a new abstract domain that computes a sound over-approximation of neural networks over input regions defined by a combination of convex constraints.
    \tool{} propagates these constraints and tightens bounds over the true feasible region.
    However, bounding a neuron over the intersection of these constraints has no closed-form solution,
    so \tool{} relaxes each constraint into the objective with a Lagrange multiplier
    and solves the resulting max-min problem with a projected primal-dual method,
    alternating a projected gradient step on the input with a multiplier update.
    \tool{} supports any convex constraint with a subgradient, \eg, from automatic differentiation.
    We evaluate \tool{} on 1{,}944 instances across four convolutional networks with motion-blur structured perturbations with halfspace or $\ell_2$-ball constraints.
    On standard unconstrained $\ell_\infty$ property, \tool{} verifies as many instances as \gcpcrown{} at a similar runtime.
    On constrained properties, \tool{} verifies 60\% more instances than \gcpcrown{} on $\ell_2$-ball properties, and 22\% more in total.
\end{abstract}

\section{Introduction}
    \label{sec:intro}
    Deep neural networks (DNNs) have been increasingly adopted in real-world applications and safety-critical domains such as autonomous driving~\cite{shao2023safety} and medical diagnostics~\cite{bizjak2022deep}.
    However, they can fail in unexpected ways, \eg, physical obstructions like stickers on road signs can fool traffic classifiers~\cite{eykholt2018robust}.
    \emph{Neural network verification} (NNV), a new fledgling field~\citep{wang2018formal,singh2019abstract,katz2022reluplex,wang2021beta}
    aims to formally verify that a network behaves correctly under adverse conditions~\citep{huang2020survey},
    \eg, a network satisfies a specified property for all possible inputs within a defined region.
    In recent years, researchers have developed many NNV algorithms and tools
    ~\citep{chiu2025sdp,zhou2024scalable,duong2026generating,duong2025neuralsat,duong2025neuralsat2,duong2026compositional,duong2026verifying,duong2026verifying2,wu2024marabou,pyrat}.
    State-of-the-art (\sota{}) NNV tools, which adapt program analysis and constraint solving techniques,
    have been shown to be effective and scale to large networks with millions of parameters, bringing many excitements and promises to NNV research~\cite{kaulen20256th}.

    Despite recent advances, scalability remains a major challenge in NNV.
    Modern NNV tools adopt \emph{branch-and-bound} (BaB)~\cite{bunel2020branch,duong2026verifying3},
    which recursively divides a verification problem into subproblems,
    and at each subproblem, used an \emph{abstract domain}, inspired by abstract interpretation~\citep{cousot1977abstract},
    to compute an over-approximation of the network's behavior from given input region.
    \emph{The tightness of abstraction is what ultimately determines the verification efficiency}, \eg, a more precise domain prunes more subproblems before it explodes combinatorially.

    A long line of NNV work has therefore pursued computationally efficient yet precise abstract domains,
    ranging from fast but coarse intervals (\ibp{}~\citep{wang2018formal}) to relational domains such as \zonotope{}~\citep{singh2018fast}
    and the polytope-based domains, \eg, \crown{}~\citep{zhang2018efficient} and \deeppoly{}~\citep{singh2019abstract}.
    More recent refinements push this precision further,
    including optimizing relaxations (\alphacrown{}~\citep{xu2020fast}),
    tightening $\ell_2$-ball (\sdpcrown{}~\citep{chiu2025sdp}),
    or adding cutting planes on hidden neurons (\gcpcrown{}~\citep{zhang2022general}).
    However, these works are all confined to compute abstraction of one $\ell_p$-norm ball properties.

    Recent work has moved toward richer, more realistic properties, such as structural robustness against coordinated input transformations,
    \eg, filtering~\citep{duong2026verifying} or combined perturbations~\citep{duong2026verifying2}.
    However, these efforts do not enrich the abstract domain itself but \emph{reduce} the property to a standard $\ell_\infty$ problem
    by prepending an auxiliary subnetwork that generates the perturbation from fewer control variables.
    This thus restricts the property to those that can be reduced by changing the architecture of the network and cannot express arbitrary input constraints.

    The need for an abstract domain that can directly handle more expressive input regions is not hypothetical: whenever an input region is described \emph{accurately}, it is rarely a single $\ell_p$ ball.
    A growing line of work that reasons directly about the \emph{input space} of a network such as \emph{verifiable input space}
    - the largest region on which a network's outputs are certified safe -
    characterize it as a union of balls bounded by additional safety constraints~\citep{chehade2025levis}.
    Likewise, methods that compute the \emph{preimage} of a target output set
    represent it as a union and intersection of many halfspaces~\citep{zhang2024provable}.
    In both cases, the region of interest is expressed as a base $\ell_p$ ball combined with additional convex constraints.

    In this work, we propose \emph{\underline{C}onstrained \underline{L}agrangian \underline{A}bstract \underline{D}omain} (\tool{}),
    a new abstract domain that verifies DNNs over input regions defined by a combination of convex constraints.
    \tool{} propagates these constraints through the network and tightens every neuron's bounds over the true feasible region,
    \eg, the combination of all the input constraints.
    Intuitively, each added constraint rules out inputs the network can never receive,
    so \tool{} no longer has to account for them and can produce a tighter bound.
    To tackle the challenge of tightening the bounds over combined convex constraints (which has no closed form),
    \tool{} relaxes each constraint into the objective with a Lagrange multiplier and solves the resulting max-min problem with a projected primal-dual method,
    alternating a projected gradient step on the input with a multiplier update.
    Each constraint is supplied as a convex constraint function with a subgradient,
    so \tool{} supports any such constraint, \eg, the intersection of a halfspace and $\ell_2$ balls, through auto-gradient.
    The resulting bound is \emph{sound} and \emph{tight}.

    We evaluate \tool{}
    on 1{,}944 instances across four convolutional networks with motion-blur structured perturbations, each augmented with halfspace or $\ell_2$-ball constraints.
    On the standard unconstrained $\ell_\infty$ property, \tool{} verifies as many instances as \gcpcrown{} at a similar runtime.
    On constrained properties the gap widens sharply: \tool{} verifies about 60\% more instances than \gcpcrown{} on $\ell_2$-ball properties, at a higher runtime,
    because every baseline bounds over the enclosing $\ell_\infty$ interval,
    while \tool{} propagates the constraint and tightens each neuron bound over the feasible region.
    Overall, \tool{} verifies 22\% more instances than the \sota{} \gcpcrown{} and $3.3\times$ as many on the largest network,
    and its advantage grows as input perturbation strength increases.

    Our contributions include:
        (1) A new abstract domain \tool{} that handles additional convex constraints by relaxing each one into the objective with a Lagrange multiplier and computing its gradient by automatic differentiation;
        (2) We prove \tool{} returns a sound bound at every iteration, so it can stop early, and mechanize its duality results in Lean;
        (3) We establish the stopping conditions on the constraints under which \tool{} returns a tight bound; and
        (4) We evaluate \tool{} on 1,944 instances with and without halfspace and $\ell_2$-ball constraints,
        where it verifies 22\% more instances than \gcpcrown{} overall and about 60\% more on $\ell_2$-ball properties, while matching \gcpcrown{} on unconstrained $\ell_\infty$ properties at a similar runtime.

\section{Motivating Example}
    \label{sec:motivating}


    We illustrate \tool{} using a simple DNN with ReLU in~\autoref{fig:motivating-example}.
    Learned weights are shown on the edges, while learned bias for each neuron is shown above or below it.
    For illustration, we use $x$ and $x'$ neurons to represent the pre-and post-activations of the ReLU layers,
    \eg, $x'_3$ and $x'_4$ are the post-activation values after applying ReLU to $x_3$ and $x_4$, respectively.
    The computation is then:
    \begin{align}\label{eq:forward-computation}
      x_3 &= -2x_1 + 2x_2, & x'_3 &= \mathrm{ReLU}(x_3), & x_4 &= 2x_1 - x_2, & x'_4 &= \mathrm{ReLU}(x_4), \quad y= x'_5 + 2x'_6 \nonumber\\
      x_5 &= -2x'_4 + 9, & x'_5 &= \mathrm{ReLU}(x_5), & x_6 &= -x'_3 - 1, & x'_6 &= \mathrm{ReLU}(x_6)
    \end{align}
    Suppose we want to verify that $y$ is always positive for all inputs $(x_1, x_2)$ in a certain input region:
    \begin{equation}\label{eq:specs}
      \forall (x_1, x_2),\quad \underbrace{x_1 \in [-2, 2] \land x_2 \in [-1, 1]}_{\mathcal{B}} \land \underbrace{x_1 + x_2 \le 0}_{\mathcal{H}} \implies y > 0
    \end{equation}
    This property involves the interval range $\mathcal{B}: x_1 \in [-2, 2]$ and $x_2 \in [-1, 1]$
    and the half-space\footnote{A linear constraint in 2D involving two variables.} constraint  $\mathcal{H}: x_1 + x_2 \le 0$.

      To handle complex constraints, \tool{} associates each variable $v_i$ with symbolic polyhedral bounds (to preserve relational information) and concrete scalar bounds $l_i, u_i$ (to check ReLU stability).
      The key difference from prior work is \emph{how} these scalars are computed: while \deeppoly{}~\citet{singh2019abstract} evaluates the symbolic expressions merely over the input box $\mathcal{B}$, \tool{} yields tighter scalars by optimizing over the full feasible set $\mathcal{B} \cap \mathcal{H}$ (where $\mathcal{H}: x_1 + x_2 \le 0$ is the halfspace constraint).

      Starting with inputs $x_1 \in [-2, 2]$ and $x_2 \in [-1, 1]$, the first affine transformer computes:
      \begin{equation}\label{eq:aff1}
        x_3 = -2x_1+2x_2, \qquad x_4 = 2x_1-x_2
      \end{equation}
      Using the input bounds, we get concrete ranges $x_3 \in [-6,6]$ and $x_4 \in [-5,5]$. Since these ranges cross zero, both neurons are unstable. \tool{} therefore applies the standard linear relaxation (\autoref{sec:background:linear-relaxation}). For the lower bound we take $x' \ge 0$, and for the upper bound we follow \deeppoly{} and compute:
      \begin{equation}\label{eq:relu1}
        0 \le x'_3 \le 0.5\,x_3 + 3, \qquad 0 \le x'_4 \le 0.5\,x_4 + 2.5
      \end{equation}

      Propagating these abstractions through the subsequent layers yields the final output $y$. By back-substituting the intermediate symbolic bounds layer-by-layer, \tool{} expresses the lower bound of $y$ in terms of the original inputs $(x_1, x_2)$:
      \begin{equation}\label{eq:lag-lb}
        y \ge -2x_1 + x_2 + 4
      \end{equation}

        \begin{figure}[t]
            \begin{subfigure}[t]{0.52\linewidth}
            \centering
            \resizebox{\linewidth}{!}{%
            \begin{tikzpicture}[
                >=stealth,
                neuron/.style={circle, draw, minimum size=6mm, inner sep=0pt, font=\small},
                input/.style={neuron, fill=green!20},
                hidden/.style={neuron, fill=blue!20},
                relu/.style={neuron, fill=orange!20},
                output/.style={neuron, fill=red!20},
                weight/.style={font=\footnotesize, fill=white, inner sep=1pt}, 
                bias/.style={font=\footnotesize, text=black}
            ]

                \def\layersep{2.2cm} 
                \def\nodesep{2.2cm}    

                \node[input] (x1) at (0, \nodesep) {$x_1$};
                \node[input] (x2) at (0, 0) {$x_2$};

                \node[hidden, label={[bias]below:0}] (x3) at (\layersep, \nodesep) {$x_3$};
                \node[hidden, label={[bias]above:0}] (x4) at (\layersep, 0) {$x_4$};

                \node[relu] (x5) at (2*\layersep, \nodesep) {$x'_3$};
                \node[relu] (x6) at (2*\layersep, 0) {$x'_4$};

                \node[hidden, label={[bias]below:9}] (x7) at (3*\layersep, \nodesep) {$x_5$};
                \node[hidden, label={[bias]above:-1}] (x8) at (3*\layersep, 0) {$x_6$};

                \node[relu] (x9) at (4*\layersep, \nodesep) {$x'_5$};
                \node[relu] (x10) at (4*\layersep, 0) {$x'_6$};

                \node[output, label={[bias]below:0}] (y) at (5*\layersep, 0.5*\nodesep) {$y$};

                \draw[->, thick] (x1) -- node[weight] {-2} (x3);
                \draw[->, thick] (x2) -- node[weight, pos=0.3] {2} (x3);
                \draw[->, thick] (x1) -- node[weight, pos=0.3] {2} (x4);
                \draw[->, thick] (x2) -- node[weight] {-1} (x4);
                \draw[->, thick] (x3) -- node[weight, above] {$\max(0, x_3)$} (x5);
                \draw[->, thick] (x4) -- node[weight, below] {$\max(0, x_4)$} (x6);

                \draw[->, thick] (x5) -- node[weight] {0} (x7);
                \draw[->, thick] (x6) -- node[weight, pos=0.3] {-2} (x7);
                \draw[->, thick] (x5) -- node[weight, pos=0.3] {-1} (x8);
                \draw[->, thick] (x6) -- node[weight] {0} (x8);
                \draw[->, thick] (x7) -- node[weight, above] {$\max(0, x_5)$} (x9);
                \draw[->, thick] (x8) -- node[weight, below] {$\max(0, x_6)$} (x10);

                \draw[->, thick] (x9) -- node[weight] {1} (y);
                \draw[->, thick] (x10) -- node[weight] {2} (y);

            \end{tikzpicture}}%
            \caption{A simple neural network with ReLU. $x_1,x_2$ are input and $y$ is output.}
            \label{fig:motivating-example}
            \end{subfigure}\hfill
            \begin{subfigure}[t]{0.46\linewidth}
              \centering
              \includegraphics[width=0.82\linewidth]{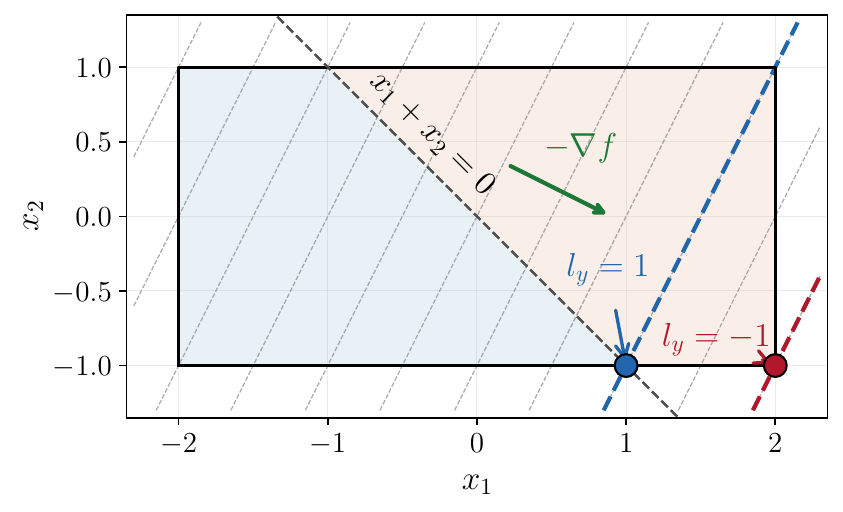}
              \caption{%
                Concretizing $y \ge -2x_1 + x_2 + 4$.
                Without $\mathcal{H}$ (\textcolor{red}{red}): returns $l_y=-1$ at infeasible $(2,-1)$.
                \tool{} (\textcolor{blue}{blue}): returns $l_y=1$ at feasible $(1,-1)$.
              }
              \label{fig:x7-concretize}
            \end{subfigure}
            \caption{Motivating example: network architecture (a) and constrained concretization (b).}
        \end{figure}

        To verify $y > 0$, \tool{} computes the concrete lower bound of~\autoref{eq:lag-lb} over the feasible input region:
        \begin{equation}\label{eq:primal-y}
            l_y \;=\; \min_{x\in\mathcal{B}} \; f(x) \;=\; \min_{x\in\mathcal{B}} \;\bigl(-2x_1 + x_2 + 4\bigr) \quad\text{subject to}\quad x_1+x_2\le 0
        \end{equation}
        To solve~\autoref{eq:primal-y}, 
        we use a Lagrangian technique~\citep{boyd2004convex,dvijotham2018dual},
        which converts the constrained optimization into an unconstrained one by folding the constraints into the objective as penalty terms
        that discourage infeasible solutions. 
        Let $\mathcal{L}(x,\mu)$ denote the Lagrangian, which contains the minimized objective $f$ and the constraint function $h(x) = x_1+x_2 \le 0$ weighted by $\mu \ge 0$:
        \begin{equation}\label{eq:lagrangian}
          \mathcal{L}(x,\mu) \;=\; (-2x_1 + x_2 + 4) \;+\; \mu\,(x_1+x_2), \qquad \mu \ge 0
        \end{equation}
        where $\mu \ge 0$ controls how strongly violations are penalized.
        For every feasible $x$, the term $\mu(x_1+x_2)$ is non-positive, so $\mathcal{L}(x,\mu) \le f(x)$ and hence $\min_{x\in\mathcal{B}}\mathcal{L}(x,\mu) \le l_y$ for every $\mu \ge 0$.
        We now solve the constrained problem~\autoref{eq:primal-y} using the Lagrangian~\autoref{eq:lagrangian}.

        In particular, if $\mu$ is too small (\eg, $\mu=0$), the penalty barely matters and the minimizer runs to infeasible point $(2,-1)$ (unconstrained solution).
        If $\mu$ is too large, the term $\mu(x_1+x_2)$ becomes a huge \emph{reward} for points deep inside the halfspace (where $x_1+x_2$ is very negative).
        The best $\mu$ sits in between, balancing the two effects.
        \tool{} searches for $\mu$ that gives the largest, and thus tightest, lower bound via an iterative loop that raises $\mu$
        proportionally to current violation until the penalty is strong enough to keep $(x_1,x_2) \in \mathcal{H}$:
        \begin{equation}\label{eq:dual-x7}
          \max_{\mu\ge 0}\;\min_{x\in\mathcal{B}} \mathcal{L}(x,\mu)  \;=\; \max_{\mu\ge 0}\;\min_{x\in\mathcal{B}} \big\{ (-2x_1 + x_2 + 4) \;+\; \mu\,(x_1+x_2) \big\}
        \end{equation}
        \tool{} solves this dual optimization problem by alternating between two steps:
        (i) adjusting $x$ to decrease the Lagrangian and (ii) adjusting $\mu$ to increase the penalty on violations until convergence.
        Each iteration performs two steps: 

        \textbf{(1) Primal step.} \tool{} moves $(x_1,x_2)$ along the negative gradient of $\mathcal{L}$, which decreases the objective $-2x_1+x_2+4$.
        The penalty $\mu(x_1+x_2)$ adds $-\mu(1,1)$ to this direction at every iterate 
        so it lowers $x_1+x_2$ and pulls a violating point 
        back toward $\mathcal{H}$ with a strength set by $\mu$.
        After the move, $(x_1,x_2)$ is clamped back into the interval $\mathcal{B}$

        \textbf{(2) Dual step.} \tool{} increases the penalty weight $\mu$ by an amount proportional to how much $\mathcal{H}$ is currently violated.
        This makes subsequent primal steps push more strongly toward the feasible region, so the two steps drive $(x_1,x_2)$ toward a point that is both low-objective and feasible.

        The solver stops once its certified bound stops improving, returning $l_y=1$, attained at $(1,-1)$ on the boundary $x_1+x_2=0$
        instead of $l_y=-1$ at infeasible $(2,-1)$.

        \textbf{Result.}
        Finally, we check the result against the specification,
        which confirms the computed lower bound $l_y = 1 > 0$ successfully verifies the specification $y > 0$ over the feasible input region $\mathcal{B}\cap\mathcal{H}$, where ignoring $\mathcal{H}$ gives the loose bound $l_y=-1$.


\section{The \tool{} Abstraction Domain}
    \label{sec:tool}

    \autoref{alg:clad-main} shows how \tool{} computes output bounds for a DNN $\mathcal{N}$ over an input ball $\mathcal{B}$ with constraints $\{h_j\}_{j=1}^{M}$.
    \tool{} first traverses $\mathcal{N}$ in topological order and computes concrete bounds at activation nodes to construct their linear relaxations (\autoref{line:alg:clad-main:cond}), while linear nodes are propagated exactly in symbolic form.
    For each node $v_i$, the abstraction maintains symbolic and concrete bounds:
    \begin{equation*}
        \underline{A}_i x+\underline{b}_i\le v_i\le\overline{A}_i x+\overline{b}_i,
        \qquad
        l_i\le v_i\le u_i.
    \end{equation*}
    The symbolic bounds preserve the dependence on the inputs, while the concrete bounds determine the activation relaxation.
    To bound each activation $v_i$, \tool{} collects its predecessors (\autoref{line:alg:clad-main:subgraph}) and initializes the symbolic bounds (\autoref{line:alg:clad-main:init}).
    It then traverses the subgraph in reversed topological order, tracing $v_i$ backward until both symbolic bounds are expressed in the inputs (\autoref{line:alg:clad-main:backward}).

    During the traversal, \tool{} substitutes the exact affine map of each linear node (\autoref{line:alg:clad-main:linear}).
    For a linear or convolution node with weight $W_k$ and bias $c_k$, \Substitute composes the exact affine:
    \begin{equation*}
        \Substitute(A,b,v_k)=(AW_k,\;Ac_k+b).
    \end{equation*}
    At each activation, \tool{} constructs lower and upper linear relaxations (\autoref{sec:background:linear-relaxation}) from its concrete bounds $[l_k,u_k]$ (\autoref{line:alg:clad-main:relubuild}).
    It selects which relaxation to substitute based on the coefficient sign (\autoref{line:alg:clad-main:reluacc}-\autoref{line:alg:clad-main:reluacc2}).
    For $r(x)=d_r\odot x+e_r$ and $r'(x)=d_{r'}\odot x+e_{r'}$, the substitution is
    \begin{equation}
        \Substitute(A,b,r,r')=\big(A^+\mathrm{diag}(d_r)+A^-\mathrm{diag}(d_{r'}),\;A^+e_r+A^-e_{r'}+b\big)
        \label{eq:subst}
    \end{equation}
    where $A^+=\max(A,0)$ and $A^-=\min(A,0)$ elementwise. Passing $(\underline{r}_k,\overline{r}_k)$ gives a sound lower bound (a positive coefficient preserves the bound direction), while passing $(\overline{r}_k,\underline{r}_k)$ gives a sound upper bound (a negative coefficient reverses the direction).
    \begin{algorithm}[t]
        \footnotesize
        \caption{\tool{} algorithm.}
        \label{alg:clad-main}
        \Input{DNN $\mathcal{N}$; input ball $\mathcal{B} = \{x : \|x - x_0\|_p \le \varepsilon\}$; constraint functions $\{h_j\}_{j=1}^{M}$}
        \Output{output bounds $[l_{\mathrm{out}}, u_{\mathrm{out}}]$}

        \For{$v_i$ in $\TopoSort(\mathcal{N})$}{
            \If{$\IsActivation(v_i) \lor v_i = v_{\mathrm{out}}$\label{line:alg:clad-main:cond}}{
                $\mathcal{G}_i \leftarrow \{v_i\} \cup \Predecessors(v_i)$ \tcp*{sub-DAG feeding $v_i$} \label{line:alg:clad-main:subgraph}
                $(\underline{A}_i, \underline{b}_i), (\overline{A}_i, \overline{b}_i) \leftarrow (I, 0), (I, 0)$ \tcp*{init to the trivial bound $v_i \le v_i \le v_i$} \label{line:alg:clad-main:init}
                \tcp{compute $v_i$ symbolically in the inputs $x$}
                \For(\tcp*[f]{back-substitute from $v_i$ to input $x$}){$v_k$ in $\ReverseTopoSort(\mathcal{G}_i)$}{ \label{line:alg:clad-main:backward}
                    \uIf(){$\IsActivation(v_k)$}{
                        $(\underline{r}_k, \overline{r}_k) \leftarrow \Relax(v_k, l_k, u_k)$ \tcp*{linear bound relaxation $\underline{r}_k \le v_k \le \overline{r}_k$} \label{line:alg:clad-main:relubuild}
                        $(\underline{A}_i,\underline{b}_i) \leftarrow \Substitute(\underline{A}_i,\underline{b}_i,\underline{r}_k,\overline{r}_k)$ \tcp*{substitute lower bound} \label{line:alg:clad-main:reluacc}
                        $(\overline{A}_i,\overline{b}_i) \leftarrow \Substitute(\overline{A}_i,\overline{b}_i,\overline{r}_k,\underline{r}_k)$ \tcp*{substitute upper bound}\label{line:alg:clad-main:reluacc2}
                    }
                    \uElse{
                        $(\underline{A}_i,\underline{b}_i) \leftarrow \Substitute(\underline{A}_i,\underline{b}_i,v_k)$ \tcp*{substitute lower bound} \label{line:alg:clad-main:linear}
                        $(\overline{A}_i,\overline{b}_i) \leftarrow \Substitute(\overline{A}_i,\overline{b}_i,v_k)$ \tcp*{substitute upper bound}
                    }
                }
                \tcp{concretize each side over $\mathcal{B} \cap \bigcap_j \{h_j \le 0\}$ (\autoref{alg:clad-pd})}
                $l_i \leftarrow \Concretize(\underline{A}_i,\underline{b}_i,\mathcal{B},\{h_j\},+1)$ \tcp*{compute concrete lower bound} \label{line:alg:clad-main:concretize_lower}
                $u_i \leftarrow \Concretize(\overline{A}_i,\overline{b}_i,\mathcal{B},\{h_j\},-1)$ \tcp*{compute concrete upper bound} \label{line:alg:clad-main:concretize_upper}
            }
        }
        \Return{$[l_{\mathrm{out}}, u_{\mathrm{out}}]$}\;
    \end{algorithm}

    Finally, \tool{} concretizes the symbolic bounds into $[l_i,u_i]$ by optimizing them over the feasible input set (\autoref{line:alg:clad-main:concretize_lower} -\autoref{line:alg:clad-main:concretize_upper}).
    This feasible set combines the input ball $\mathcal{B}$ with the additional constraints $\{h_j\}_{j=1}^{M}$ (\autoref{sec:tool:constraint}).
    \tool{} solves this optimization with the projected primal dual method (\autoref{sec:tool:concretize} and~\autoref{alg:clad-pd}).
    \tool{} uses the resulting $[l_i,u_i]$ to construct later activation relaxations and returns the output range as the final certificate.

    \subsection{Constraint Formulation}
        \label{sec:tool:constraint}

        The input property of a NNV instance can contain one or more constraints.
        Adding a constraint to \tool{} requires only a \emph{constraint function} that measures its violation.

        \noindent\textbf{Constraint.}
        Each constraint has the standard form $h_j(x)\le 0$, where $h_j:\mathbb{R}^D\to\mathbb{R}$.
        The $h_j(x)$ is nonpositive iff $x$ satisfies the constraint; otherwise, its positive value measures the violation.

        \noindent\textbf{Penalty.}
        The nonnegative penalty associated with $h_j$ is $\phi_j(x)=\max\bigl(0,h_j(x)\bigr)$, which equals zero when the constraint is satisfied and measures the violation otherwise.

        \noindent\textbf{Example.}
        For the $\ell_2$ ball $\mathcal{B}_2=\{x:\|x-x_c\|_2\le r\}$, the constraint function is $h(x)=\|x-x_c\|_2-r$ and the penalty is
        $\phi(x)=\mathrm{ReLU}\bigl(\|x-x_c\|_2-r\bigr)$.
        \tool{} uses $\phi_j$ in place of $h_j$: it is convex, zero on the feasible set, and has subgradient $0$ at its kink, so the non-differentiable center $x_c$ never enters.

    \subsection{Contrained Abstraction}
        \label{sec:tool:concretize}
        Linear relaxation methods~\citep{zhang2018efficient,singh2019abstract} bound each neuron $v_i$ in the inputs $x$ by $\underline{A}_i x+\underline{b}_i\le v_i\le\overline{A}_i x+\overline{b}_i$.
        \emph{Concretization} optimizes these expressions over all valid inputs to obtain $[l_i,u_i]$.
        These scalars determine the stability of an activation and its relaxation.
        A piecewise linear activation is \emph{stable} if $[l_i,u_i]$ lies within one linear piece, where the activation is linear and needs no approximation.
        It is \emph{unstable} if $[l_i,u_i]$ spans the nonlinear region, requiring a linear relaxation.
        Thus, tighter ranges keep more neurons stable and yield tighter bounds~\citep{xu2024training}.
        For the $\ell_p$ ball $\mathcal{B}=\{x:\|x-x_0\|_p\le\varepsilon\}$, H\"older's inequality gives the closed form:
        \begin{equation}\label{eq:holder-concretization}
          \begin{aligned}
        u_i \;=\; \overline{A}_i x_0 + \varepsilon\,\|\overline{A}_i\|_q + \overline{b}_i \qquad
        l_i \;=\; \underline{A}_i x_0 - \varepsilon\,\|\underline{A}_i\|_q + \underline{b}_i
        \end{aligned}
        \end{equation}
        where $1/p + 1/q = 1$.
        When $p=\infty$ and $q=1$, the worst case independently moves each input coordinate to the interval boundary matching the sign of its coefficient in $\overline{A}_i$.
        The result is its value at $x_0$ plus $\varepsilon$ times the sum of absolute coefficients, or the $\ell_1$ norm.
        An additional constraint such as $x_1+x_2\le0$ may remove the interval boundaries from feasible space, thus, closed-form solutions from Hölder is no longer valid.
        \tool{} concretizes each neuron over the feasible space $\mathcal{B}\cap\mathcal{H}$:
        \begin{equation}\label{eq:concretize}
          \begin{aligned}
            l_i \;=\; \min_{x \in \mathcal{B} \cap \mathcal{H}} \underline{A}_i x + \underline{b}_i \qquad
            -u_i \;=\; \min_{x \in \mathcal{B} \cap \mathcal{H}} -\big(\overline{A}_i x + \overline{b}_i\big)
          \end{aligned}
        \end{equation}
        where $\mathcal{B} = \{x : \|x - x_0\|_p \le \varepsilon\}$ be the base $\ell_p$ norm ball input property and $\mathcal{H} = \{x : h_j(x) \le 0,\ j = 1, \ldots, M\}$
        the feasible set formed by the base interval and additional constraints.

        There are three common methods to solve the constrained concretization problem~\citep{boyd2004convex}:
        (i) \emph{Dual ascent} evaluates $\mathrm{d}(\mu)$ exactly and ascends along its supergradient, but general $h_j$ does not provide the closed form inner minimization;
        (ii) \emph{Primal dual interior point methods} are accurate but solve each instance with costly Newton steps, so they do not scale to \tool{} concretization; and
        (iii) \emph{Projected primal dual methods}~\citep{arrow1958studies,nedic2009subgradient} use a first order descent step in $x$ projected onto $\mathcal{B}$ and an ascent step in $\mu$ projected onto $\{\mu\ge0\}$.

        Therefore, \tool{} uses projected primal-dual method to solve~\autoref{eq:concretize}.
        We present the procedure for a sound and tight lower bound $l_i$ and write its back-substituted bound as $Ax+b$;
        the upper bound is symmetric and minimizes $-(\overline{A}_i x+\overline{b}_i)$, as in~\autoref{eq:concretize}.

    \noindent\textbf{Lagrangian.}
        The Lagrangian~\citep{boyd2004convex,dvijotham2018dual} handles constrained optimization by replacing constraints $h_j(x)\le0$ with a weighted sum of their values in the objective.
        Each constraint has a nonnegative $\mu_j$.
        Concretely, the Lagrangian assigns a dual multiplier $\mu_j \ge 0$ to each inequality constraint $h_j(x) \le 0$ in ~\autoref{eq:concretize} and adds them into the objective:
        \begin{equation}\label{eq:clad-lagrangian}
          \mathcal{L}(x, \mu) \;=\; A x + b \;+\; \sum_{j=1}^{M} \mu_j\,h_j(x)
        \end{equation}
        For every feasible $x\in\mathcal{B}\cap\mathcal{H}$, $h_j(x)\le0$ and $\mu_j\ge0$, so each term $\mu_jh_j(x)\le0$.
        Thus, $\mathcal{L}(x,\mu)\le Ax+b$, with equality when all constraints are active.

    \noindent\textbf{Dual problem.}
        The dual function minimizes the Lagrangian over $\mathcal{B}$: $\mathrm{d}(\mu)=\min_{x\in\mathcal{B}}\mathcal{L}(x,\mu)$.
        Substituting~\autoref{eq:clad-lagrangian} gives the dual problem
        \begin{equation}
            \max_{\mu\ge0}\mathrm{d}(\mu)
            =\max_{\mu\ge0}\min_{x\in\mathcal{B}}
            \left(Ax+b+\sum_{j=1}^{M}\mu_jh_j(x)\right)
        \end{equation}
        Every $\mu\ge0$ gives a sound lower bound on $l_i$ (\autoref{thm:concretize-sound}).
        Thus, if the DNN verifier stops solving $\max_{\mu\ge0}\mathrm{d}(\mu)$ due to a time constraint, its current nonoptimal $\mu$ still gives a valid bound $\mathrm{d}(\mu)$.
        For $x_{\mu}\in\arg\min_{x\in\mathcal{B}}\mathcal{L}(x,\mu)$, the constraint values form a supergradient of the concave dual function:
        \begin{equation}
            \mathrm{d}(\mu')
            \le\mathrm{d}(\mu)
            +\sum_{j=1}^{M}h_j(x_{\mu})(\mu'_j-\mu_j)
        \end{equation}
        where $\mu'\ge0$ is other multiplier vector.
        Thus, a positive $h_j(x_{\mu})$ raises $\mu_j$ and vice versa.

    \noindent\textbf{Primal-dual solver.}
        \autoref{alg:clad-pd} handles both bounds with a sign $s$: it minimizes $s(Ax+b)$ over $\mathcal{B}\cap\mathcal{H}$,
        with $s=+1$ for the lower bound and $s=-1$ for the upper bound, whose result it negates.
        Its Lagrangian is $\mathcal{L}_s(x,\mu)=s(Ax+b)+\sum_j\mu_jh_j(x)$, which equals~\autoref{eq:clad-lagrangian} when $s=+1$.
        Each iteration of~\autoref{alg:clad-pd} performs two steps.
        The \emph{primal step} computes the gradient $q = s\cdot A + \sum_j \mu_j \nabla h_j(x)$ of $\mathcal{L}_s$ where each gradient $\nabla h_j$ is obtained via automatic differentiation and descends as $x \leftarrow x - \eta q$.
        The \emph{dual step} updates each multiplier as $\mu_j \leftarrow \max(0,\, \mu_j + \tau\, h_j(x))$, increasing $\mu_j$ when constraint $h_j$ is violated and decreasing it when satisfied, so the weight on each violated constraint grows until $x$ is feasible.
        Concretely, the two steps update:
        \begin{equation}
            \begin{aligned}
                q_t&=sA+\sum_{j=1}^{M}\mu_{j,t}\nabla h_j(x_t)
                &\widetilde{x}_{t+1}&=x_t-\eta_tq_t\\
                \mu_{j,t+1}&=\max\bigl(0,\mu_{j,t}+\tau h_j(\widetilde{x}_{t+1})\bigr)
                &x_{t+1}&=\Pi_{\mathcal{B}}(\widetilde{x}_{t+1})
            \end{aligned}
        \end{equation}
        The primal variable is initialized at the closed-form Hölder minimizer of $s\cdot Ax$ over $\mathcal{B}$ and projected back onto $\mathcal{B}$ by $\Pi_{\mathcal{B}}$ after each step.
        Hölder's inequality gives the warm start value
        \begin{equation}\label{eq:holder-warm}
            \min_{x\in\mathcal{B}}s(Ax+b)
            =s(Ax_0+b)-\varepsilon\|A\|_q,
            \qquad \frac{1}{p}+\frac{1}{q}=1
        \end{equation}
        For $p=\infty$, a minimizer is $x^{(0)}=x_0-\varepsilon\,\operatorname{sign}(sA)$.
        The primal variable is not projected onto $\mathcal{H}$ because the additional constraints control $x$ via the regularization terms in the Lagrangian.
        The projection used by the primal step is
        \begin{equation}
            \Pi_{\mathcal{B}}(z)=\operatorname*{arg\,min}_{x\in\mathcal{B}}\|x-z\|_2
        \end{equation}
        For an $\ell_\infty$ ball, it clips $z$ elementwise to the interval $[x_0-\varepsilon\mathbf{1},\,x_0+\varepsilon\mathbf{1}]$:
        \begin{equation}
            \Pi_{\mathcal{B}}(z)
            =\min\bigl(x_0+\varepsilon\mathbf{1},\,\max(x_0-\varepsilon\mathbf{1},\,z)\bigr)
        \end{equation}
        where $\min$ and $\max$ here are elementwise operations.

        \tool{} adapts $\eta$ with the Barzilai--Borwein step~\citep{barzilai1988two}, which takes larger steps in flat regions and smaller steps in sharply curved ones, avoiding manual tuning.
        With $dx_t=x_t-x_{t-1}$ and $y_t=q_t-q_{t-1}$, the Barzilai and Borwein update is
        \begin{equation}
            \eta_t=\max\left(0,\min\left(\frac{|dx_t^\top y_t|}{\|y_t\|_2^2},1\right)\right)
        \end{equation}

        The Lagrangian at an unconverged iterate is not a sound bound, since $\mathcal{L}_s(x,\mu)\ge\mathrm{d}(\mu)$.
        \tool{} instead evaluates, after every step, a \emph{certified bound} from the tangent of $\mathcal{L}_s$ at $x$, minimized over $\mathcal{B}$ as in~\autoref{eq:holder-warm}:
        \begin{equation}\label{eq:certified}
            \hat{\mathrm{d}}(x,\mu)=\mathcal{L}_s(x,\mu)+\min_{x'\in\mathcal{B}}g^\top(x'-x)=\mathcal{L}_s(x,\mu)+g^\top(x_0-x)-\varepsilon\|g\|_q
        \end{equation}
        where $g=s\cdot A+\sum_j\mu_j\nabla h_j(x)$ is a subgradient of $\mathcal{L}_s(\cdot,\mu)$.
        Each $h_j$ is convex, so $\mathcal{L}_s(\cdot,\mu)$ lies above its tangent and $\hat{\mathrm{d}}(x,\mu)\le\mathrm{d}(\mu)$, which is sound by~\autoref{thm:concretize-sound} for every iterate $x$ and every $\mu\ge0$.
        \autoref{alg:clad-pd} keeps the largest certified bound $\beta$, stops once $\beta$ has not improved by more than $\epsilon$ for $K$ consecutive iterations, and returns $s\cdot\beta$ (\autoref{line:alg:primalstep:final}).
        Stopping at any iteration is therefore safe.

    \begin{algorithm}[t]
        \small
        \caption{\tool{} Concretization Algorithm.}
        \label{alg:clad-pd}

        \Input{weight $A$ and bias $b$ of a linear bound;
               $\ell_p$ ball $\mathcal{B} = \lbrace x : \|x - x_0\|_p \le \varepsilon \rbrace$;
               constraint functions $\{h_j\}_{j=1}^{M}$;
               sign $s \in \lbrace -1,+1 \rbrace$}
        \Hyper{primal step size $\eta$,
               dual step size $\tau$,
               initial multiplier $\mu_0$, tolerance $\epsilon$, patience $K$,
               number of iterations $T$}
        \Output{lower bound if $s=+1$, upper bound if $s=-1$}

        $x \leftarrow \arg\min_{x \in \mathcal{B}}\, s\cdot A x$;\ \ $\beta \leftarrow \min_{x\in\mathcal{B}} s(Ax+b)$ \tcp*{warm start and $\mu{=}0$ bound, \autoref{eq:holder-warm}}
        $\mu_j \leftarrow \mu_0 \quad \forall j \in \{1, \ldots, M\}$ \tcp*{initialize dual multipliers}

        \For(\tcp*[f]{run optimization for $T$ iterations}){$t = 1$ \KwTo $T$}{
            $q, x' \leftarrow$ \PrimalStep{$x, \mu, A, \{h_j\}, \eta$} \tcp*[f]{primal descent}


            $dx \leftarrow x' - x$;\ \ $y \leftarrow \big(s \cdot A + \sum_j \mu_j \nabla h_j(x')\big) - q$ \tcp*{changes in $x$ and gradient}

            $x \leftarrow x'$ \tcp*{update primal variable}

            $\mu \leftarrow$ \DualStep{$x, \mu, \{h_j\}, \tau$} \tcp*[f]{dual ascent}

            $\beta \leftarrow \max\big(\beta,\ \hat{\mathrm{d}}(x,\mu)\big)$ \tcp*{certified bound, \autoref{eq:certified}}

            \lIf(\tcp*[f]{Barzilai-Borwein step size}){$\|y\|_2^2 > \epsilon$}{
              $\eta \leftarrow \max\big(0,\; \min\big(|dx^\top y| \,/\, \|y\|_2^2,\; 1\big)\big)$
            }

            $x \leftarrow \Pi_{\mathcal{B}} \big(x \big)$ \label{line:alg:primalstep:project}  \tcp*{project onto $\mathcal{B}$ only}

            \lIf(\tcp*[f]{stop if converged}){$\beta$ has not improved by $\epsilon$ for $K$ iterations}{
              \textbf{break}
            }
        }
        \Return{$s\cdot\beta$}\label{line:alg:primalstep:final}\tcp*{certified lower/upper bound}
    \end{algorithm}

    \subsection{Soundness, Tightness, and Convergence of \tool{}}
    \label{sec:tool:soundness}

    Every certified bound $\hat{\mathrm{d}}(x,\mu)$ is sound, so~\autoref{alg:clad-pd} can stop at any iteration and return a valid bound.
    Under the Slater condition, which requires a strictly feasible interior point, maximizing the dual function over $\mu\ge0$ recovers the exact constrained optimum and is therefore \emph{tight}.
    \autoref{sec:appendix:analysis} gives formal proofs of \autoref{thm:concretize-sound}, \autoref{thm:clad-tight}, and resulting corollaries for halfspace and $\ell_2$ ball constraints.
    For a simultaneous variant with a constant shared step and bounded multipliers, the averaged iterates converge to the constrained optimum at rate $O(1/\sqrt{T})$ (\autoref{thm:clad-convergence}); this concerns tightness only, as soundness holds at every iteration.

\section{Evaluation}
    \label{sec:eval}


    \noindent\textbf{Verification Benchmark}
        We use four convolutional networks from~\citet{chiu2025sdp}: ConvSmall (M) trained on MNIST, and ConvSmall (C), ConvDeep, and ConvLarge trained on CIFAR-10.
        We do not use VNN-COMP instances, which are generally easy
        \footnote{Out of 2700 instances in regular track of VNN-COMP'24~\citep{brix2024fifth},
        nearly 90\% problems ($\sim$2400/2700) were solved with \deeppoly{} or \alphacrown{}.}
        for this comparison: most are already solved by simple abstractions such as \ibp{} or \zonotope{}, so they cannot differentiate among more precise methods.
        In contrast, these networks span 4 to 7 layers, 8 to 64 convolutional channels, and 55K to 2.47M parameters, and they are known to separate verifiers sharply~\citep{wang2021beta,chiu2025sdp},
        \eg, on the CIFAR-10 ConvLarge, \alphacrown{}, \gcpcrown{}, and \sdpcrown{} verify 2.5\%, 6\%, and 63.5\% of the images, respectively~\citep{chiu2025sdp}.

        We generate more expressive instances using motion-blur perturbations from VeriDou~\citep{duong2026verifying2},
        whose convolutional parameterization covers a continuous range of blur angles, at six strengths for each network.
        We use structured perturbations
        noise because real corruptions such as blur change correlated pixels~\citep{duong2026verifying,duong2026verifying2}.
        For every network and perturbation strength, we compare three setups: (i) $\ell_\infty$ interval (\texttt{linf}), (ii) that same interval intersects with a halfspace (\texttt{linf+hs}), (iii) interval intersects with an $\ell_2$ ball (\texttt{linf+l2}).
        This yields 1{,}944 instances across the four networks and three property types.
        \autoref{sec:benchmark} gives more details on the benchmarks.

    \noindent\textbf{Comparison Baselines}
        We compare \tool{} with abstraction baselines including:
        \ibp{}~\citep{wang2018formal}, 
        \zonotope{}, 
        \crown{}~\citep{xu2020automatic}/\deeppoly{}~\citep{singh2019abstract}, 
        \alphacrown{}~\citep{xu2020fast}, 
        \sdpcrown{}~\citep{chiu2025sdp}, 
        and \gcpcrown{}~\citep{zhang2022general}. 
        For a fair comparison, we give every baseline the minimum interval that contain the constrained input region of each property.
        \tool{} receives the same tightened interval in addition to the constraint itself.

    \paragraph{Comparison Metrics}
        The \emph{number of verified instances} is our primary measure of an abstraction's effectiveness, since more verified instances means robustness holds over a broader range of instances.
        An instance is verified if its output lower bound $l_{\mathrm{out}} > 0$.
        For classification models, $l_{\mathrm{out}} > 0$ when the margin $Y_i - Y_j > 0$ between the true class $i$ and a competing class $j$.



    \subsection{RQ1: Performance with Existing Abstraction}
        \label{sec:rq1}

        \begin{table}[t]
          \centering
          \caption{Verified instances and average runtime per instance per network and method.}
          \label{tab:rq1}
          \setlength{\tabcolsep}{4pt}
          \resizebox{\linewidth}{!}{%
            \begin{tabular}{@{}l*{14}{r}@{}}
              \toprule
              \multirow{2}{*}{\textbf{Network}} &
              \multicolumn{2}{c}{\ibp{}} &
              \multicolumn{2}{c}{\zonotope{}} &
              \multicolumn{2}{c}{\crown{} / \deeppoly{}} &
              \multicolumn{2}{c}{\alphacrown{}} &
              \multicolumn{2}{c}{\sdpcrown{}} &
              \multicolumn{2}{c}{\gcpcrown{}} &
              \multicolumn{2}{c}{\tool{}} \\
              & Verified & Time & Verified & Time & Verified & Time & Verified & Time & Verified & Time & Verified & Time & Verified & Time \\
              \midrule
              ConvSmall (M) & 0 & 0.00 & 216 & 0.00 & 264 & 0.04 & 297 & 4.94 & 297 & 14.24 & 300 & 18.81 & \textbf{358} & 19.23 \\
              ConvSmall (C) & 0 & 0.00 & 222 & 0.00 & 270 & 0.04 & 312 & 5.24 & 312 & 15.59 & 315 & 28.77 & \textbf{375} & 22.55 \\
              ConvDeep & 0 & 0.00 & 198 & 0.00 & 243 & 0.06 & 300 & 8.35 & 303 & 22.64 & 306 & 35.95 & \textbf{354} & 38.77 \\
              ConvLarge & 0 & 0.00 & 0 & 0.10 & 0 & 0.20 & 18 & 30.78 & 18 & 63.70 & 18 & 72.37 & \textbf{59} & 78.05 \\
              \bottomrule
            \end{tabular}}%
        \end{table}

    \autoref{tab:rq1} presents the verified instance counts and average per-instance runtimes for each method across the four convolutional networks.
    \tool{} achieves the highest number of verified instances on each network, totaling 1,146, which improves 22\% compared to the strongest baseline \gcpcrown{}.
    The performance gap is most pronounced on ConvLarge, where \tool{} verifies 59 instances, whereas \alphacrown{}, \sdpcrown{}, and \gcpcrown{} each verify fewer than one-third as many instances.
    \ibp{} does not verify any instances, which indicates that the benchmark instances are non-trivial.

    In terms of runtime, \ibp{}, \zonotope{} and \crown{} require at most 0.2 seconds per instance but verify substantially fewer instances.
    while \alphacrown{}, \sdpcrown{}, and \gcpcrown{} take up to 72s per instance on ConvLarge.
    \tool{}'s runtime is close to that of \gcpcrown{} while verifying 41 to 60 more instances per network.
    Consequently, the primal-dual solver increases verification precision while maintaining a runtime similar to that of the baseline abstraction domains.

    \subsection{RQ2: Performance on Constrained and Unconstrained Properties}
        \label{sec:rq2}

        \begin{figure}[t]
          \centering
          \includegraphics[width=\linewidth]{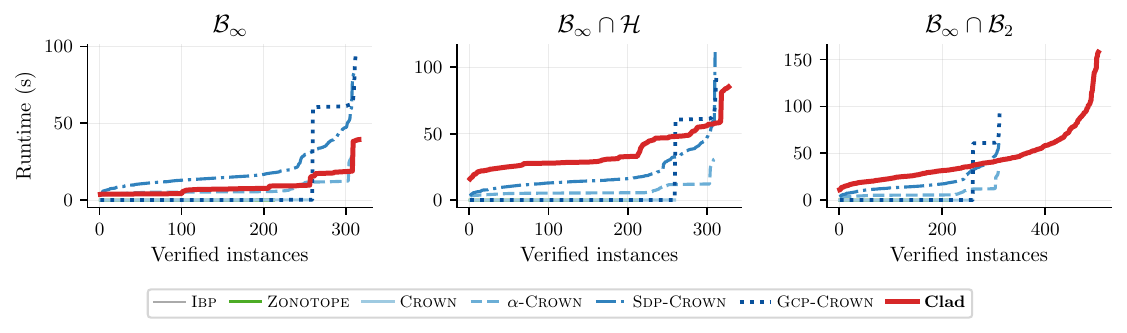}
            \caption{Cactus plots of runtime per verified instances for each property type}
          \label{fig:rq2}
        \end{figure}
        The leftmost cactus plot in \autoref{fig:rq2} presents the runtime for verified instances of the unconstrained $\ell_\infty$ property.
        \alphacrown{}, \sdpcrown{}, \gcpcrown{}, and \tool{} verify 309, 310, 313, and 316 instances, respectively.
        \gcpcrown{} verifies most instances in approximately 0.1 seconds,
        but requires around 100 seconds for 50 challenging instances when the general cutting mechanism is activated.
        \tool{} processes each instance in approximately 4 to 40 seconds.
        \tool{} therefore achieves precision comparable to that of \alphacrown{} variants, with a similar per-instance compute time.

        The advantage of \tool{} becomes pronounced when the input region includes an additional constraint
        as demonstrated by the second and third cactus plots in \autoref{fig:rq2}.
        Each baseline curve remains consistent across all three panels
        so even with a tightened interval no baseline verifies more instances than under the unconstrained property.
        On $\mathcal{B}_\infty \cap \mathcal{H}$, \tool{} verifies approximately 330 instances,
        compared to approximately 313 for the strongest baselines.
        On $\mathcal{B}_\infty \cap \mathcal{B}_2$, \tool{} verifies approximately 500 instances,
        representing a sharp increase of roughly 60\% over \gcpcrown{}.
        With additional contrains, \tool{} requires additional time for its Lagrangian concretization step,
        resulting in its cactus curve lying slightly above the baselines.
        The $\ell_2$ ball reduces the feasible region most significantly and therefore demonstrates the clearest benefit -
        \tool{} requires at most approximately 60\% time per instance while also verifies 60\% more instances than \gcpcrown{}.

    \subsection{RQ3: Performance on Different Perturbation Radii}
        \label{sec:rq3}
        \begin{figure}[t]
          \centering
          \includegraphics[width=\linewidth]{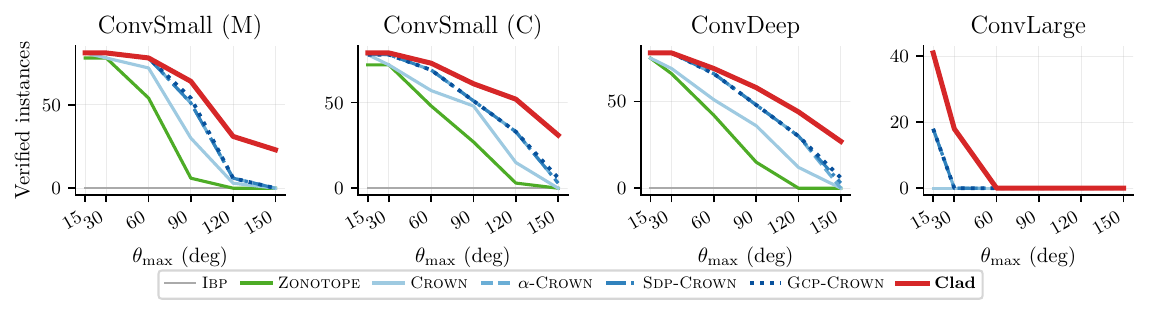}
            \caption{Verified instances per motion-blur angle $\theta_{\max}$ on the four convolutional networks.
            }
          \label{fig:rq3}
        \end{figure}

        \autoref{fig:rq3} presents the number of verified instances for each method as the maximum motion-blur angle $\theta_{\max}$ increases.
        At $15^\circ$ and $30^\circ$ on ConvSmall and ConvDeep, \alphacrown{} variants is the same as \tool{}
        because the small input space remains manageable for all these abstractions.
        As $\theta_{\max}$ increases and the input space expands, the performance gap between \tool{} and the baseline methods becomes more pronounced.
        At $150^\circ$ on ConvSmall (C) and ConvDeep, \gcpcrown{} verifies only six instances each,
        whereas \tool{} verifies 31 and 27 instances, respectively.
        ConvLarge, which contains 2.47M parameters, presents greater difficulty.
        \alphacrown{}, \sdpcrown{}, and \gcpcrown{} each verify 18 instances at $15^\circ$ and none at larger angles, whereas \tool{} verifies 41 instances at $15^\circ$ and 18 at $30^\circ$.
        Therefore, the advantage of \tool{} observed in \autoref{sec:rq2} increases as the perturbation strength and model size grows.

\section{Conclusion}
    \label{sec:conclusion}
    We presented \tool{}, an abstract domain that retains additional input constraints in a Lagrangian and uses projected primal dual concretization to tighten neuron bounds.
    \tool{} handles convex constraints and their intersections, returns a sound bound at every iteration, and is tight under the Slater condition.
    Future work includes supporting other activations, integration with branch and bound~\citep{bunel2020branch}, and extending to nonconvex perturbations~\citep{duong2026verifying}.

\bibliographystyle{unsrtnat}
\bibliography{paper}

\appendix

\section{Background}
    \label{sec:background}

    \subsection{Neural Network Verification (NNV)}


        \begin{definition}[NNV Problem]
        Given a DNN \(N\) and a property (or specification) $\phi$, the NNV problem asks whether $\phi$ is a valid property of $N$.
        Typically, $\phi$ takes the form $\phi_{in} \Rightarrow \phi_{out}$, where $\phi_{in}$ is a property over the inputs of $N$ and $\phi_{out}$ is a property over the outputs of $N$.
        \end{definition}

        Modern DNN techniques~\citep{wang2021beta,duong2024harnessing,duong2023dpll,duong2025neuralsat,zhang2022general,zhou2024scalable,bak2021nnenum,wu2024marabou,ferrari2022complete} treat this verification problem as a satisfiability problem by encoding the DNN \(N\) and the property $\phi$ as a logical formula:
        \begin{equation}
            N \land \phi_{in} \land \neg \phi_{out}
            \label{eq:dnn_verification}
        \end{equation}
        If~\autoref{eq:dnn_verification} is unsatisfiable (UNSAT), the considered property holds.
        Otherwise, it is satisfiable (SAT) and a counterexample exists that disproves the property.

    \subsection{Linear Relaxation}
        \label{sec:background:linear-relaxation}

        Linear relaxation bounds each intermediate neuron $v_i$ by affine functions of the network inputs $x$:
        \begin{equation}
            \underline{A}_i x + \underline{b}_i \;\le\; v_i \;\le\; \overline{A}_i x + \overline{b}_i
            \label{eq:linear-bound}
        \end{equation}
        For linear layers, bounds propagate exactly.
        For an unstable ReLU with pre-activation $\hat{z} \in [l, u]$, $l < 0 < u$, and output $z = \mathrm{ReLU}(\hat{z})$, \crown{}~\citep{xu2020automatic} and \deeppoly{}~\citep{singh2019abstract} over-approximate the activation by:
        \begin{equation}
          \alpha\,\hat{z} \;\le\; z \;\le\; \bar{d}\,\hat{z} + \bar{b} \qquad \bar{d} = \tfrac{u}{u-l} \qquad \bar{b} = -\tfrac{lu}{u-l} \qquad  \alpha \in [0,1]
            \label{eq:relu-upper}
        \end{equation}
        Stable neurons ($l \ge 0$ or $u \le 0$) need no relaxation.
        These per-neuron relaxations are back-substituted through preceding linear layers until~\autoref{eq:linear-bound} is expressed entirely in terms of $x$, and then \emph{concretized} by optimizing over input space $\mathcal{B}$:
        \begin{equation}
          \label{eq:concretization}
          l_i = \min_{x \in \mathcal{B}}\;(\underline{A}_i x + \underline{b}_i), \qquad u_i = \max_{x \in \mathcal{B}}\;(\overline{A}_i x + \overline{b}_i)
        \end{equation}
        For a standard $\ell_\infty$ ball, these reduce to closed-form expressions via H\"{o}lder's inequality~\cite{wang2021beta}.
        ~\hd{cite}~\tl{cited}

\section{Related Work}
    \label{sec:related}
    Existing abstractions form a spectrum of speed-accuracy tradeoffs,
    \eg, better speed by \ibp{}~\citep{wang2018formal}, \zonotope{}~\citep{singh2018fast}, \crown{}~\citep{zhang2018efficient} and \deeppoly{}~\citep{singh2019abstract},
    or better precision by \alphacrown{}~\citep{xu2020automatic}, \gcpcrown{}~\citep{zhang2022general}.
    All of these methods, however, compute bounds by optimizing over the enclosing $\ell_\infty$ interval 
    and discard any additional input constraints such as halfspace or $\ell_2$-ball intersections.
    \tool{} addresses this gap by incorporating such constraints directly into each bound propagation step.

    Lagrangian duality itself has a long history in NNV.
    \citet{dvijotham2018dual} dualize the layer equations of the entire network and optimize the multipliers by subgradient ascent, 
    \citet{bunel2020lagrangian} instead dualize the standard LP relaxation 
    which \citet{depalma2021improved} then integrate into a complete branch-and-bound verifier.
    Because these dualize the network itself 
    the dual dimension grows with the network size which isn't as scalable as linear relaxation abstraction domains 
    since linear relaxation abstractions keep closed-form bound propagation and dualize only a small set of side constraints,
    \eg, $\beta$-\textsc{Crown} for branch-and-bound splits~\citep{wang2021beta},
    \gcpcrown{} for cutting planes~\citep{zhang2022gcpcrown},
    and \textsc{Invprop} for output constraints~\citep{kotha2024provably}.
    Closest to \tool{}, Clip-and-Verify~\citep{zhou2025clipandverify} derives linear constraints in the input space from hidden-neuron branch-and-bound splits and output specifications,
    and uses them to clip the input box and retighten intermediate bounds.
    For a single linear constraint over a box, it solves the same one-dimensional piecewise-linear dual that \tool{} uses for a halfspace.
    \tool{} differs in the constraints it handles: they are part of the input specification rather than derived during verification,
    and they may be any convex function, such as an $\ell_2$ ball, whereas Clip-and-Verify handles only linear inequalities.
    \tool{}
    dualizes only the $M$ input constraints during concretization (\autoref{alg:clad-pd}), thus, yields a dual of dimension $M$ independent of the network size, which also remains scalable.

\section{Analysis of \tool{}}
    \label{sec:appendix:analysis}

    \subsection{Soundness and Tightness}

    The results in this appendix are stated for the upper bound $u_i$ in the maximization form, which matches the Lean mechanization.
    The lower-bound form of~\autoref{sec:tool:concretize} follows by replacing $Ax+b$ with $-(Ax+b)$.

    An abstraction is \emph{sound} if the bounds it returns always over-approximate and never underestimate the true range of a neuron.
    This means the value the abstraction returned for the upper bound must never fall below the true constrained maximum
    and the lower bound never reach above the true constrained minimum.
    For \tool{}, soundness holds for \emph{every} Lagrange multiplier value $\mu$,
    and every certified bound $\hat{\mathrm{d}}(x,\mu)$ of~\autoref{eq:certified} is bounded by $\mathrm{d}(\mu)$,
    which means~\autoref{alg:clad-pd} may stop at any iteration and still return a valid bound.

    \begin{theorem}[\tool{} soundness]\label{thm:concretize-sound}
    For every $\mu \ge 0$,
    \begin{equation}\label{eq:clad-weakduality}
      \mathrm{d}(\mu) \;=\; \max_{x \in \mathcal{B}} \mathcal{L}(x, \mu)
      \;\ge\; \max_{x \in \mathcal{B} \cap \mathcal{H}} A x + b \;=\; u_i
    \end{equation}
    \end{theorem}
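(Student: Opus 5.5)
The proof will be a direct weak-duality argument. It needs no convexity, no Slater condition, and no property of the iterates; it uses only the sign of the multiplier terms on the feasible set. First I will fix the sign convention for the maximization form that the appendix adopts. Since the lower-bound Lagrangian \autoref{eq:clad-lagrangian} is obtained from $\min s(Ax+b)$ with $s=-1$, the upper-bound Lagrangian is
\begin{equation*}
  \mathcal{L}(x,\mu) \;=\; Ax+b-\sum_{j=1}^{M}\mu_j h_j(x),
\end{equation*}
that is, $-\mathcal{L}_{-1}(x,\mu)$. With this convention, \autoref{eq:clad-weakduality} is exactly the negation of the lower-bound statement $\mathrm{d}(\mu)\le l_i$. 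So it suffices to prove the maximization form, and the minimization form used in \autoref{sec:tool:concretize} will follow by substituting $-(Ax+b)$ for $Ax+b$.

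The core step is pointwise. I fix $\mu\ge 0$ and an arbitrary $x\in\mathcal{B}\cap\mathcal{H}$. For each $j$ we have $h_j(x)\le 0$ and $\mu_j\ge 0$, so $-\mu_j h_j(x)\ge 0$. Summing over $j$ gives $\mathcal{L}(x,\mu)\ge Ax+b$. Because $x\in\mathcal{B}$, it is admissible in the outer maximization, so
\begin{equation*}
  \mathrm{d}(\mu)=\max_{x'\in\mathcal{B}}\mathcal{L}(x',\mu)\;\ge\;\mathcal{L}(x,\mu)\;\ge\;Ax+b .
\end{equation*}
Taking the supremum over $x\in\mathcal{B}\cap\mathcal{H}$ on the right yields $\mathrm{d}(\mu)\ge u_i$.

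The only step needing care, which I expect to be the main (if mild) obstacle, is justifying that the two maxima are well defined rather than suprema. $\mathcal{B}$ is a closed $\ell_p$ ball, hence compact. Each $h_j$ is convex and real-valued on $\mathbb{R}^D$, hence continuous. Therefore $\mathcal{B}\cap\mathcal{H}$ is compact and $\mathcal{L}(\cdot,\mu)$ is continuous, so both maxima are attained whenever the sets are nonempty. If $\mathcal{B}\cap\mathcal{H}=\emptyset$, I will adopt the convention $u_i=-\infty$, and the inequality holds trivially. Alternatively, the whole argument can be phrased with suprema, which avoids attainment altogether and matches how I would state it in Lean.

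Finally, I will note the consequence used by \autoref{alg:clad-pd}. In maximization form, $-\mathcal{L}(\cdot,\mu)=-(Ax+b)+\sum_j\mu_jh_j$ is convex, so $\mathcal{L}(\cdot,\mu)$ is concave. The tangent at any iterate $x$ therefore upper-bounds it on $\mathcal{B}$, and maximizing that tangent over $\mathcal{B}$ by Hölder gives a value $\ge\mathrm{d}(\mu)\ge u_i$. Every certified bound $\hat{\mathrm{d}}(x,\mu)$ of \autoref{eq:certified} is thus sound for every $x$ and every $\mu\ge 0$, so stopping at any iteration is safe.
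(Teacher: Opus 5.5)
Your proof is correct and is the same weak-duality argument the paper gives: on $\mathcal{B}\cap\mathcal{H}$ each term $-\mu_j h_j(x)$ is nonnegative, so $\mathcal{L}(x,\mu)\ge Ax+b$ pointwise, and maximizing over the larger set $\mathcal{B}$ gives $\mathrm{d}(\mu)\ge u_i$. Your extra remarks on attainment of the maxima and on the tangent-based certified bound are sound additions that the paper treats separately (Lean side conditions, main-text concretization discussion), and there is one wording slip: the lower-bound Lagrangian corresponds to $s=+1$, not $s=-1$, though the upper-bound Lagrangian $-\mathcal{L}_{-1}$ you end up with is the correct one.
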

    \begin{proof}
        At every feasible point $x \in \mathcal{B} \cap \mathcal{H}$, each constraint function satisfies $h_j(x) \le 0$.
        Thus, every penalty term $-\mu_j h_j(x)$ is non-negative and the Lagrangian dominates the objective: $\mathcal{L}(x, \mu) = A x + b -\mu_j h_j(x) \ge A x + b$.
        Maximizing over the larger set $\mathcal{B} \supseteq \mathcal{B} \cap \mathcal{H}$ can only increase this value, giving $\mathrm{d}(\mu) \ge u_i$.
    \end{proof}

    \begin{remark}[Lean mechanization]
    \label{rem:lean-soundness}
    \autoref{thm:concretize-sound} is fully mechanized in Lean~4 against Mathlib as \texttt{CLAD.primalVal\_le\_dualFn} in \texttt{lean/CLAD/Soundness.lean} of our released code.
    Lean mechanizes the weak-duality step $\mathrm{d}(\mu)\ge u_i$ for the exact dual function.
    The tangent step relating the certified bound of~\autoref{eq:certified} to $\mathrm{d}(\mu)$ is the one-line convexity argument of~\autoref{sec:tool:concretize}; it is not mechanized, and neither is floating-point arithmetic.
    Every mechanized result in this subsection is \texttt{sorry}-free and depends only on Lean's three standard axioms, \eg, \texttt{propext}, \texttt{Classical.choice} and \texttt{Quot.sound}.
    The mechanization is more general than the statement above: the first proof step, \texttt{CLAD.le\_lagrangian\_of\_mem\_feasible}, establishes $A x + b \le \mathcal{L}(x, \mu)$ at every feasible $x$ using only $\mu \ge 0$ and $h_j(x) \le 0$, with no convexity, no continuity, and no linearity of the objective.
    The mechanized statement carries two side conditions, $\mathcal{B} \cap \mathcal{H} \neq \emptyset$ and $\mathcal{L}(\cdot, \mu)$ bounded above on $\mathcal{B}$; these are artifacts of encoding $\max$ as a real-valued supremum, not mathematical content, and both hold automatically for the compact $\mathcal{B}$ of our setting.
    \end{remark}

    The condition required for \tool{} to achieve tightness follows the classical \emph{Slater condition}~\cite[\S 5.2]{boyd2004convex}.

    \begin{definition}[Slater condition]\label{def:slater}
    Assume the $M$ convex constraint functions can be ordered such that $h_1, \ldots, h_k$ are non-affine and $h_{k+1}, \ldots, h_M$ are affine.
    The Slater condition holds for the constrained problem $\max_{x \in \mathcal{B} \cap \mathcal{H}} A x + b$ if there exists a point $\hat{x}$ in the interior of the input domain $\operatorname{int} \mathcal{B}$
    such that $h_j(\hat{x}) < 0$ for every $j \in \{1, \ldots, k\}$ and $h_j(\hat{x}) \le 0$ for every $j \in \{k+1, \ldots, M\}$.
    \end{definition}

    \begin{theorem}[Tightness of \tool{} for convex constraint intersections]\label{thm:clad-tight}
    Let the radius $\varepsilon > 0$, let each constraint function $h_j$ be convex and continuous on $\mathbb{R}^D$,
    and suppose there exists a point in the relative interior of the base interval $\hat{x} \in \operatorname{int} \mathcal{B}$
    satisfying $h_j(\hat{x}) < 0$ for every non-affine constraint function $h_j(.), \forall j \in \{1, \ldots, k\}$
    and $h_{j}(\hat{x}) \le 0$ for every affine constraint function $h_{j}, \forall j \in \{k+1, \ldots, M\}$.
    Then the dual bound $\mathrm{d}(\mu)$ attains the minimal upper bound $u_i^{\star}$ at its minimum over the multiplier $\mu \ge 0$:
    \begin{equation}\label{eq:clad-strongduality}
        \min_{\mu \ge 0} \mathrm{d}(\mu) \;=\; \max_{x \in \mathcal{B} \cap \mathcal{H}} A x + b \;=\; u_i^{\star}
    \end{equation}
    and this minimum is attained at some optimal multiplier $\mu^{\star} \ge 0$.
    \end{theorem}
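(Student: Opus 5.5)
The plan is to prove strong duality for the convex program $\max_{x\in\mathcal{B}\cap\mathcal{H}} Ax+b$ with $\mathcal{B}$ treated as an abstract closed convex domain and only the $h_j$ dualized. Weak duality, $\min_{\mu\ge0}\mathrm{d}(\mu)\ge u_i^\star$, is already \autoref{thm:concretize-sound}. Note that $u_i^\star$ is finite, because $\mathcal{B}\cap\mathcal{H}$ is nonempty (it contains $\hat{x}$) and compact ($\mathcal{B}$ is compact and the $h_j$ are continuous). What remains is the reverse inequality, together with the construction of a multiplier $\mu^\star\ge0$ with $\mathrm{d}(\mu^\star)\le u_i^\star$. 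Such a $\mu^\star$ attains the minimum and gives equality. Throughout, $\mathcal{L}(x,\mu)=Ax+b-\sum_j\mu_jh_j(x)$, as in the maximization form used in the proof of \autoref{thm:concretize-sound}.

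\textbf{Step 1 (purely nonaffine case, $k=M$).} First assume every constraint is strictly satisfied at $\hat{x}$. Define the convex set
\[
\mathcal{C}=\{(t,s)\in\mathbb{R}\times\mathbb{R}^M:\exists x\in\mathcal{B},\ t\le Ax+b,\ s_j\ge h_j(x)\ \forall j\}.
\]
It is convex because $\mathcal{B}$ is convex, the objective is affine, and each $h_j$ is convex. The point $(u_i^\star,0)$ cannot lie in the interior of $\mathcal{C}$: an interior point $(u_i^\star+\delta,0)$ with $\delta>0$ would give a feasible $x$ with $Ax+b>u_i^\star$. A supporting hyperplane at $(u_i^\star,0)$ therefore gives $(\lambda_0,\lambda)\neq0$ with $\lambda_0 t-\lambda^\top s\le\lambda_0u_i^\star$ on $\mathcal{C}$. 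Since $t$ can be taken arbitrarily negative and $s$ arbitrarily large, $\lambda_0\ge0$ and $\lambda\ge0$. If $\lambda_0=0$, plugging in $x=\hat{x}$ with $s_j=h_j(\hat{x})<0$ gives $-\lambda^\top h(\hat{x})\le0$, which forces $\lambda=0$ and contradicts $(\lambda_0,\lambda)\neq0$. So $\lambda_0>0$, and setting $\mu^\star=\lambda/\lambda_0$ gives $Ax+b-\sum_j\mu^\star_jh_j(x)\le u_i^\star$ for all $x\in\mathcal{B}$, i.e.\ $\mathrm{d}(\mu^\star)\le u_i^\star$.

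\textbf{Step 2 (mixed case, the main obstacle).} When some constraints $h_{k+1},\dots,h_M$ are affine and only weakly satisfied, the argument of Step 1 breaks at the exclusion of $\lambda_0=0$. This is where the interiority $\hat{x}\in\operatorname{int}\mathcal{B}$, and $\varepsilon>0$, must be used. I would run the same separation argument but with
\[
\mathcal{C}'=\{(t,s,r):\exists x\in\mathcal{B},\ t\le Ax+b,\ s_j\ge h_j(x)\ (j\le k),\ r_j=h_j(x)\ (j>k)\}.
\]
Suppose $\lambda_0=0$. Evaluating at $\hat{x}$ first gives $\lambda_{1..k}=0$. The remaining inequality says the affine function $x\mapsto\sum_{j>k}\lambda_jh_j(x)$ is $\ge0$ on $\mathcal{B}$, while $\sum_{j>k}\lambda_jh_j(\hat{x})\le0$ because $\lambda_j\ge0$ and $h_j(\hat{x})\le0$. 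So this affine function attains its minimum over $\mathcal{B}$ at the interior point $\hat{x}$, and hence is constant. Constancy means the $\lambda_j$-combination of the gradients vanishes, i.e.\ the separating hyperplane is degenerate. This gives a contradiction unless the affine constraints are redundant. To handle the degenerate case I would first discard linearly dependent affine constraints, or equivalently restrict to the affine hull they cut out, so that the combination must vanish. The sign issue for $\lambda_{j>k}$ is handled by writing the affine constraints as inequalities $r_j\ge h_j(x)$ instead. If this bookkeeping gets heavy, I would fall back on citing the refined Slater theorem for affine inequalities~\cite[\S5.2.3]{boyd2004convex} applied to the domain $\mathcal{B}$.

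\textbf{Step 3 (conclusion).} Combining $\mathrm{d}(\mu^\star)\le u_i^\star$ with \autoref{thm:concretize-sound} gives $u_i^\star\le\min_{\mu\ge0}\mathrm{d}(\mu)\le\mathrm{d}(\mu^\star)\le u_i^\star$. This is \autoref{eq:clad-strongduality}, with the minimum attained at $\mu^\star$.
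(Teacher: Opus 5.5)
Your Step~1 is a correct, self-contained Slater argument for the case where every constraint is strictly satisfied at $\hat x$; it does not even use $\hat x\in\operatorname{int}\mathcal{B}$. The paper does not argue at this level. Its proof is a one-line appeal to the refined Slater theorem. Its Lean mechanization goes through Sion's minimax theorem on $[0,\kappa]^M\times\mathcal{B}$ under a uniform strict margin, and recovers the affine relaxation only for a single halfspace by perturbing $\hat x$ (a trick that fails in precisely the degenerate situation below). So the mixed case you isolate in Step~2 is exactly where more than the textbook argument is needed, and there your proposal has a genuine gap. Your analysis of $\lambda_0=0$ is right up to $\sum_{j>k}\lambda_j h_j\equiv 0$ with $\lambda_{j>k}\ge 0$ nonzero. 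But this is neither a contradiction nor redundancy: it says precisely that the affine system contains implicit equalities. Take $\mathcal{B}=[-1,1]$, $h_1(x)=x$, $h_2(x)=-x$, $\hat x=0$. Then $(\lambda_0,\lambda_1,\lambda_2)=(0,1,1)$ is a legitimate supporting hyperplane of $\mathcal{C}$ at $(u_i^\star,0)$, and the supporting-hyperplane theorem gives you no say in which hyperplane you get.

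Your proposed repair does not work. Discarding linearly dependent affine \emph{inequalities} changes the feasible set: drop $h_2$ above and, for the objective $-x$, the constrained maximum jumps from $0$ to $1$. The multiplier you would construct then certifies the wrong value. Restricting to the affine hull $V$ of the polyhedron $P=\{x:h_j(x)\le 0,\ j>k\}$ is not equivalent to discarding. By itself it only yields a multiplier controlling $\mathcal{L}$ on $\mathcal{B}\cap V$, whereas $\mathrm{d}(\mu)$ maximizes over all of $\mathcal{B}$. Closing the gap needs three further steps. (i) Move $\hat x$ slightly toward a point of $\operatorname{relint}P$ to get a point of $\operatorname{int}\mathcal{B}\cap V$ that strictly satisfies every constraint outside the implicit-equality set $E$, and apply Step~1 on $\mathcal{B}\cap V$. (ii) Run a second separation for the affine equalities $h_j=0$, $j\in E$, that cut out $V$; here discarding linearly dependent ones \emph{is} legitimate, and interiority excludes $\lambda_0=0$. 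This gives sign-free multipliers that extend the bound to all of $\mathcal{B}$. (iii) Make these nonnegative by adding $t\bar\lambda$ for large $t$, where $\bar\lambda$ is strictly positive on $E$ with $\sum_{j\in E}\bar\lambda_j h_j\equiv 0$ (it exists by LP duality); this leaves $\mathcal{L}$ unchanged. Alternatively, your fallback of citing the refined Slater theorem (Boyd--Vandenberghe \S5.2.3, Rockafellar Thm.~28.2), with the objective's domain taken to be $\mathcal{B}$, is exactly the paper's proof and suffices on its own. In that case, though, the hand-made Step~2 should not be presented as establishing the mixed case.
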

    \begin{proof}
        The assumptions fit Slater's condition in~\autoref{def:slater}.
        Thus, strong duality holds, the duality gap is zero, and there exists some multiplier $\mu^{\star} \ge 0$ attaining this optimum.
    \end{proof}

    \begin{remark}[Lean mechanization]
    \label{rem:lean-tightness}
    \autoref{thm:clad-tight} is mechanized as \texttt{CLAD.dualFn\_eq\_primalVal\_of\_slater} in \texttt{lean/CLAD/Tightness.lean}.
    The mechanized proof routes through Sion's minimax theorem (\texttt{Sion.exists\_isSaddlePointOn} in Mathlib) applied to $\mathcal{L}$ on $[0,\kappa]^M \times \mathcal{B}$, where the Slater margin supplies an a priori bound $\kappa$ on any multiplier whose dual value is at most $\mathrm{d}(0)$.
    The conclusion is the full claim of~\eqref{eq:clad-strongduality}: both the zero duality gap and the attainment of the minimum over \emph{all} $\mu \ge 0$, not merely over the truncation used inside the proof.
    Two differences from the statement above are worth recording.
    First, the objective is assumed only concave and continuous on $\mathcal{B}$; the affine $A x + b$ is a special case, so the mechanized result is a generalization, and the unused hypothesis $\varepsilon > 0$ is dropped.
    Second, the mechanized Slater hypothesis asks for a uniform strict margin $h_j(\hat{x}) \le -\delta < 0$ at \emph{every} constraint, \eg, the case $k = M$ of~\autoref{def:slater}; over finitely many constraints this is equivalent to requiring $h_j(\hat{x}) < 0$ for each $j$.
    The affine relaxation of~\autoref{def:slater} is therefore not used in the general theorem, but it is recovered exactly where the paper applies it, in~\autoref{cor:clad-tight-hs}.
    \end{remark}

    \autoref{thm:clad-tight} applies directly to the two constraint families of~\autoref{sec:tool:constraint}:

    \begin{corollary}[Halfspace tightness]\label{cor:clad-tight-hs}
    Let $\mathcal{H} = \{x : c x \le d\}$ with $c \neq 0$.
    If $\operatorname{int} \mathcal{B} \cap \mathcal{H} \neq \emptyset$, then $\min_{\mu \ge 0} \mathrm{d}(\mu) = u_i^{\star}$.
    \end{corollary}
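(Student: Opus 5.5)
The plan is to reduce the corollary to \autoref{thm:clad-tight} with a single affine constraint, i.e.\ $M=1$ and $k=0$ in \autoref{def:slater}. Set $h_1(x)=cx-d$, so that $\mathcal{H}=\{x: h_1(x)\le 0\}$. Then $h_1$ is affine, hence convex and continuous on $\mathbb{R}^D$, and the non-affine part of the constraint list is empty. The Slater condition of \autoref{def:slater} therefore asks only for a point $\hat{x}\in\operatorname{int}\mathcal{B}$ with $h_1(\hat{x})\le 0$. This is exactly the hypothesis $\operatorname{int}\mathcal{B}\cap\mathcal{H}\neq\emptyset$. Since that hypothesis makes $\operatorname{int}\mathcal{B}$ nonempty, the radius satisfies $\varepsilon>0$, which is the remaining assumption of \autoref{thm:clad-tight}. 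Invoking the theorem then gives $\min_{\mu\ge0}\mathrm{d}(\mu)=u_i^\star$, with the minimum attained.

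The main obstacle is the one flagged in \autoref{rem:lean-tightness}. The mechanized version of \autoref{thm:clad-tight} requires a \emph{strict} margin $h_j(\hat{x})<0$, while the corollary allows $\hat{x}$ to lie on the hyperplane $cx=d$. To close the gap, I would move $\hat{x}$ slightly into the open halfspace while staying inside $\operatorname{int}\mathcal{B}$. Take $\hat{x}'=\hat{x}-t\,c^\top/\|c\|_2^2$ for small $t>0$. Then $h_1(\hat{x}')=h_1(\hat{x})-t<0$. Because $\operatorname{int}\mathcal{B}$ is open, $\hat{x}'\in\operatorname{int}\mathcal{B}$ for $t$ small enough. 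This step is exactly where $c\neq 0$ is used: if $c=0$, the constraint would be $0\le d$, and no strict margin could be created when $d=0$.

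With a strictly feasible point in hand, \autoref{thm:clad-tight}, with the uniform margin $\delta=t$, applies verbatim. Weak duality from \autoref{thm:concretize-sound} supplies the inequality $\mathrm{d}(\mu)\ge u_i^\star$. Strong duality supplies the reverse inequality together with attainment of the minimum at some $\mu^\star\ge0$.

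As an optional sanity check, one could argue directly in the one-dimensional dual. The function $\mathrm{d}(\mu)=\max_{x\in\mathcal{B}}(Ax+b-\mu(cx-d))$ is convex and piecewise smooth in $\mu$. Since $\mathcal{B}$ is compact and $\mathcal{B}\cap\mathcal{H}\neq\emptyset$, we have $\mathrm{d}(\mu)\to\infty$ as $\mu\to\infty$ exactly when feasibility is strict. Otherwise $\mathrm{d}(\mu)$ is bounded below by $u_i^\star$. Either way a minimizer exists, and LP duality gives equality. I would present only the reduction above and mention this check in passing.
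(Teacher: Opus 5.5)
Your proposal is correct and follows the paper's proof: $h(x)=cx-d$ is affine, so an interior feasible point already satisfies the Slater condition of \autoref{thm:clad-tight}, and your shift $\hat{x}-t\,c^{\top}/\|c\|_2^2$, which produces a strict margin, is exactly the step in the paper's Lean mechanization and is where $c\neq 0$ is used. The only flaw is in the optional sanity check: ``either way a minimizer exists'' and ``LP duality gives equality'' hold only for polyhedral $\mathcal{B}$ ($p\in\{1,\infty\}$) and can fail for $p=2$ without strict feasibility, so you should qualify or drop it. This is harmless here, because your perturbation shows strict feasibility always holds under the corollary's hypotheses.
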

    \begin{proof}
    The constraint function $h(x) = c x - d$ is affine, so feasibility of $\hat{x}$ satisfies the Slater condition.
    \end{proof}

    \begin{remark}[Lean mechanization]
    \label{rem:lean-halfspace}
    \autoref{cor:clad-tight-hs} is mechanized as \texttt{CLAD.tight\_halfspace} in \texttt{lean/CLAD/Tightness.lean}, under exactly the non-strict hypothesis $c\hat{x} \le d$ stated above.
    The strict margin demanded by \texttt{CLAD.dualFn\_eq\_primalVal\_of\_slater} is manufactured rather than assumed: \texttt{CLAD.exists\_strict\_slater\_of\_mem\_interior} perturbs $\hat{x}$ to $\hat{x} - t v$ for a direction $v$ with $c v > 0$ and small $t > 0$, which remains in $\mathcal{B}$ because $\hat{x} \in \operatorname{int} \mathcal{B}$ and strictly decreases $c x - d$ because $c \neq 0$.
    This is the step that recovers the affine relaxation of~\autoref{def:slater} discussed in~\autoref{rem:lean-tightness}.
    \end{remark}

    \begin{corollary}[$\ell_2$-ball tightness]\label{cor:clad-tight-l2}
    Let $\mathcal{H} = \{x : \|x - x_c\|_2 \le r\}$ with $r > 0$.
    If there exists $\hat{x} \in \operatorname{int} \mathcal{B}$ with $\|\hat{x} - x_c\|_2 < r$, then $\min_{\mu \ge 0} \mathrm{d}(\mu) = u_i^{\star}$.
    \end{corollary}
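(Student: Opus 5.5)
The plan is to apply \autoref{thm:clad-tight} with the single constraint $M=1$, $h(x)=\|x-x_c\|_2-r$, so almost all of the work is checking its hypotheses. First, $h$ is convex and continuous on $\mathbb{R}^D$. The norm $x\mapsto\|x-x_c\|_2$ is convex by the triangle inequality and absolute homogeneity, and it is $1$-Lipschitz, hence continuous. Subtracting the constant $r$ preserves both properties. The radius hypothesis $\varepsilon>0$ of \autoref{thm:clad-tight} is inherited from the base ball $\mathcal{B}$; in any case \autoref{rem:lean-tightness} notes that it is unused.

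Second, I would verify the Slater condition of \autoref{def:slater}. The constraint $h$ is not affine, since $\|\cdot-x_c\|_2$ fails to be affine along any line through $x_c$. We are therefore in the case $k=M=1$, and we need a point $\hat{x}\in\operatorname{int}\mathcal{B}$ with $h(\hat{x})<0$. This is exactly the hypothesis $\|\hat{x}-x_c\|_2<r$. Because this is a uniform strict margin $\delta=r-\|\hat{x}-x_c\|_2>0$, the stricter form used in the mechanization (\autoref{rem:lean-tightness}) also holds. Unlike \autoref{cor:clad-tight-hs}, no perturbation argument is needed.

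With these hypotheses in place, \autoref{thm:clad-tight} gives $\min_{\mu\ge0}\mathrm{d}(\mu)=\max_{x\in\mathcal{B}\cap\mathcal{H}}Ax+b=u_i^\star$, with the minimum attained at some $\mu^\star\ge0$. The primal maximum exists because $\mathcal{B}\cap\mathcal{H}$ is compact and nonempty (it contains $\hat{x}$) and the objective is continuous.

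There is one subtle point. The algorithm actually uses the penalty $\phi=\max(0,h)$ rather than $h$, so one might worry whether the statement concerns the Lagrangian built from $h$ or from $\phi$. I would interpret $\mathrm{d}$ as defined in \autoref{sec:tool:concretize}, that is, with $h$. With $\phi$, Slater fails, because $\phi\ge0$ everywhere and no point has $\phi<0$. Tightness would then at best hold only in the limit $\mu\to\infty$, not as an attained minimum. I expect this to be the only real obstacle: making sure the corollary is read with $h$, for which the proof is a direct instantiation.
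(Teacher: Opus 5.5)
Your proof is correct and is essentially the paper's: the paper simply observes that $\hat{x}$ is a strictly feasible Slater point for $h(x)=\|x-x_c\|_2-r$ and invokes \autoref{thm:clad-tight}, and its Lean version uses the same margin $\delta=r-\|\hat{x}-x_c\|_2$ that you use. One correction to your closing aside: with the hinge $\phi$, tightness is not only a limit as $\mu\to\infty$, because for $\mu\ge\mu^\star$ you have $\mu\,\phi(x)\ge\mu^\star\phi(x)\ge\mu^\star h(x)$ on $\mathcal{B}$, hence $\max_{x\in\mathcal{B}}\bigl(Ax+b-\mu\,\phi(x)\bigr)\le\mathrm{d}(\mu^\star)=u_i^\star$, while feasible points (where $\phi=0$) give the reverse inequality, so the $\phi$-based dual attains $u_i^\star$ at a finite multiplier (the usual exact-penalty property) even though Slater fails literally for $\phi$.
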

    \begin{proof}
    The point $\hat{x}$ is a strictly feasible Slater point for $h(x) = \|x - x_c\|_2 - r$.
    \end{proof}

    \begin{remark}[Lean mechanization]
    \label{rem:lean-l2}
    \autoref{cor:clad-tight-l2} is mechanized as \texttt{CLAD.tight\_l2Ball} in \texttt{lean/CLAD/Tightness.lean}, instantiating the Slater margin as $\delta = r - \|\hat{x} - x_c\|_2$, which is positive precisely by the hypothesis $\|\hat{x} - x_c\|_2 < r$.
    Convexity of the constraint is discharged from Mathlib's \texttt{convexOn\_dist}.
    \end{remark}

    \subsection{Convergence}
        \label{sec:appendix:convergence}

        \begin{theorem}[Convergence of the primal-dual concretization]\label{thm:clad-convergence}
        Let each constraint function $h_j$ be convex and continuous, and suppose the Slater condition of~\autoref{thm:clad-tight} holds, so that the interval $[0,\kappa]^M$ contains the optimal multiplier $\mu^{\star}$.
        With the constant step size $\alpha = R_{\mathcal{Z}} / (G \sqrt{T})$ shared by the primal and dual updates of the simultaneous variant of~\autoref{alg:clad-pd}, the iterate averages $\bar{x} = \frac{1}{T}\sum_{t \le T} x_t$ and $\bar{\mu} = \frac{1}{T}\sum_{t \le T} \mu_t$ satisfy
        \begin{equation*}
          \mathrm{d}(\bar{\mu}) - u_i^{\star} \;\le\; \frac{R_{\mathcal{Z}}\, G}{\sqrt{T}},
        \end{equation*}
        where $R_{\mathcal{Z}} = \sqrt{R^2 + \kappa^2 M}$ is the diameter of $\mathcal{Z} = \mathcal{B} \times [0,\kappa]^M$;
        $G_x = \sup_{x,\mu} \|A - \sum_j \mu_j \nabla h_j(x)\|_2$, $G_\mu = \sup_{x \in \mathcal{B}} \|(h_1(x), \ldots, h_M(x))\|_2$, and $G = \sqrt{G_x^2 + G_\mu^2}$.
        \end{theorem}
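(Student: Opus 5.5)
The plan is to treat the concretization as a convex–concave saddle-point problem on the compact set $\mathcal{Z}=\mathcal{B}\times[0,\kappa]^M$ and to apply the standard projected (sub)gradient descent–ascent analysis of \citet{nedic2009subgradient}. We work in the maximization form of the appendix, with Lagrangian $\mathcal{L}(x,\mu)=Ax+b-\sum_j\mu_jh_j(x)$, which is concave in $x$ and affine in $\mu$. The simultaneous variant takes the ascent step $x_{t+1}=\Pi_{\mathcal{B}}(x_t+\alpha g^x_t)$ with $g^x_t=A-\sum_j\mu_{j,t}\nabla h_j(x_t)$. It takes the descent step $\mu_{t+1}=\Pi_{[0,\kappa]^M}(\mu_t+\alpha h(x_t))$; this is descent because $\partial_\mu\mathcal{L}=-h$. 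Both projections are onto convex sets, so they are nonexpansive. Write $z_t=(x_t,\mu_t)$ and $g_t=(g^x_t,-h(x_t))$. By the definitions of $G_x$ and $G_\mu$ we have $\|g_t\|_2\le G$.

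\textbf{Step 1 (one-step inequality).} Fix any comparator $z=(x,\mu)\in\mathcal{Z}$. Nonexpansiveness and expanding the square give
\[
\|z_{t+1}-z\|^2\le\|z_t-z\|^2-2\alpha\big[\langle g^x_t,x-x_t\rangle-\langle h(x_t),\mu-\mu_t\rangle\big]+\alpha^2G^2 .
\]
Concavity of $\mathcal{L}(\cdot,\mu_t)$ gives $\langle g^x_t,x-x_t\rangle\ge\mathcal{L}(x,\mu_t)-\mathcal{L}(x_t,\mu_t)$. Linearity in $\mu$ gives $-\langle h(x_t),\mu-\mu_t\rangle=\mathcal{L}(x_t,\mu)-\mathcal{L}(x_t,\mu_t)$. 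Hence the bracket is at least $\mathcal{L}(x,\mu_t)-\mathcal{L}(x_t,\mu)$, the usual gap function.

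\textbf{Step 2 (telescoping and averaging).} Summing over $t\le T$, using $\|z_1-z\|\le R_{\mathcal{Z}}$, and applying Jensen in each argument (concave in $x$, affine in $\mu$), we obtain
\[
\mathcal{L}(x,\bar\mu)-\mathcal{L}(\bar x,\mu)\le\frac{1}{T}\sum_t\big[\mathcal{L}(x,\mu_t)-\mathcal{L}(x_t,\mu)\big]\le\frac{R_{\mathcal{Z}}^2}{2\alpha T}+\frac{\alpha G^2}{2}.
\]
With $\alpha=R_{\mathcal{Z}}/(G\sqrt T)$ the right-hand side equals $R_{\mathcal{Z}}G/\sqrt T$.

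\textbf{Step 3 (comparator choice).} Take $x$ to be a maximizer of $\mathcal{L}(\cdot,\bar\mu)$ over the compact $\mathcal{B}$, so the first term is $\mathrm{d}(\bar\mu)$. Take $\mu=\mu^\star$. It then suffices to show $\mathcal{L}(\bar x,\mu^\star)\le u_i^\star$. This holds because \autoref{thm:clad-tight} gives $u_i^\star=\mathrm{d}(\mu^\star)=\max_{x\in\mathcal{B}}\mathcal{L}(x,\mu^\star)\ge\mathcal{L}(\bar x,\mu^\star)$, with $\bar x\in\mathcal{B}$ by convexity. Combining with Step 2 yields $\mathrm{d}(\bar\mu)-u_i^\star\le R_{\mathcal{Z}}G/\sqrt T$.

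\textbf{Main obstacle.} The delicate point is the truncation $[0,\kappa]^M$. The comparator must satisfy $\mu^\star\in[0,\kappa]^M$, which is exactly the Slater bound on multipliers. Following \autoref{rem:lean-tightness}, this bound is $\kappa\ge(\mathrm{d}(0)-u_i^\star)/\delta$ with $\delta=\min_j(-h_j(\hat x))$. One minor caveat: if some $h_j$ is affine and only satisfies $h_j(\hat x)\le0$, a strict margin must first be manufactured as in \autoref{rem:lean-halfspace}.

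A second issue is that $G_x$ must be finite. Since $\mu$ ranges over the compact box and $\mathcal{B}$ is compact, this needs bounded subgradients of the $h_j$ on $\mathcal{B}$, which holds because a convex function that is finite on all of $\mathbb{R}^D$ is locally Lipschitz. Finiteness of $G_\mu$ follows from continuity of the $h_j$ on the compact $\mathcal{B}$.

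Finally, $R_{\mathcal{Z}}$ is the diameter of $\mathcal{Z}$ as stated, with $R$ the diameter of $\mathcal{B}$, and it is the quantity used in Step 2 to bound $\|z_1-z\|$.
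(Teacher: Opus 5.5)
Your proposal is correct and follows the paper's argument. The paper adds a supergradient-ascent regret for $\mathcal{L}(\cdot,\mu_t)$ and a subgradient-descent regret for $\mathcal{L}(x_t,\cdot)$, cancels the shared $\mathcal{L}(x_t,\mu_t)$ terms, averages via Jensen, and identifies the saddle value with $u_i^\star$ through the tightness theorem; this is your joint estimate on $\mathcal{Z}$ with comparator $(\arg\max_{\mathcal{B}}\mathcal{L}(\cdot,\bar\mu),\,\mu^\star)$, with two slips that leave the main line intact. The $\mu$-term in your Step 1 bracket should be $+\langle h(x_t),\mu-\mu_t\rangle$, which is what the update $\mu_t+\alpha h(x_t)$ produces and what your ``hence'' needs; and the Slater bound is $\sum_j\mu_j^\star\le(\mathrm{d}(0)-A\hat{x}-b)/\delta$, not $(\mathrm{d}(0)-u_i^\star)/\delta$, which can fail to contain $\mu^\star$ --- harmless, since your proof only uses the hypothesis $\mu^\star\in[0,\kappa]^M$.
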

        \begin{proof}
        The bound follows by adding two regret guarantees: the primal step matches the standard rate for supergradient ascent on the concave function $\mathcal{L}(\cdot, \mu_t)$, and the dual step matches the analogous rate for subgradient descent on $\mathcal{L}(x_t, \cdot)$~\citep{nedic2009subgradient,nemirovski2009robust}; the two regrets add because their common term $\mathcal{L}(x_t, \mu_t)$ cancels.
        Averaging via Jensen's inequality and identifying the saddle value with $u_i^{\star}$ using~\autoref{thm:concretize-sound} and~\autoref{thm:clad-tight} yields the stated rate.
        \end{proof}
        This guarantee concerns an idealized simultaneous variant.
        \autoref{alg:clad-pd} uses sequential updates, the Barzilai--Borwein step, unbounded multipliers, and early stopping; its soundness rests on the certified bound of~\autoref{eq:certified}, not on convergence.

\section{Implementation Details}
    \label{sec:appendix:implementation}

    \tool{} is implemented in Python and uses the PyTorch framework~\citep{paszke2019pytorch}.
    It accepts neural networks in ONNX format~\citep{onnx} and verification properties in VNNLib format.
    The ONNX network is parsed into a directed acyclic graph, after which the bound-propagation pass from~\autoref{alg:clad-main} is executed.
    Each constraint is implemented in hinge form $\phi_j=\max(0,h_j)$.
    Its subgradient is computed in closed form for the interval and halfspace constraints and with PyTorch's autograd for the $\ell_2$ ball.

    \tool{} supports general convex constraints, and we consider
    three representative constraint types in our benchmark (\autoref{sec:benchmark}):
    halfspace ($c_j x \le d_j$, constraint function $c_j x - d_j$), $\ell_2$ ball ($\|x-x_c\|_2 \le r$,
    constraint function $\|x-x_c\|_2 - r$), and $\ell_\infty$ interval (represented directly as variable bounds during the primal projection step).
    The primal-dual solver for constraint-aware concretization (\autoref{alg:clad-pd}) is executed for a maximum of $T{=}500$ iterations,
    with a primal step size of $\eta{=}0.01$ and a dual step size of $\tau{=}0.1$.
    Multipliers start at $\mu_0=\|A\|_2$, the norm of the bound's coefficient vector.
    Besides $\hat{\mathrm{d}}(x,\mu)$ at every iterate, \tool{} folds into $\beta$ the $\mu{=}0$ bound and, for each constraint, the exact maximizer of the certified bound over that constraint's multiplier alone; all are certified bounds, so their maximum is returned.
    Early termination occurs when $\beta$ has not improved by more than $\epsilon{=}3 \times 10^{-4}$ for $K{=}10$ consecutive iterations.
    All computations use floating point without directed rounding, as in the baselines.
    \gcpcrown{} runs \alphacrown{} followed by CPLEX cut generation with 100 cuts and a 120-second budget per instance.

  \subsection{Runtime and Environment}
    Our experiments were conducted on a Linux system equipped with an AMD 12-core CPU, 128 GB of RAM, and an NVIDIA GeForce RTX 4080 GPU (16 GB VRAM),
    which is leveraged by all methods (\tool{} and baselines).
    Runtime is the average wall-clock time, in seconds, per instance.
    Because every method computes bounds in a single propagation pass (\autoref{sec:eval}), all instances run to completion and no timeouts occur.
    Runtime therefore reflects the full computational cost of each abstraction on an instance.

  \subsection{Verification Benchmark}
    \label{sec:benchmark}
    Our evaluation uses the standard $\ell_\infty$-interval verification setting and network architectures commonly used in NNV evaluation
    (\eg, VNN-COMPs~\citep{kaulen20256th,brix2024fifth,brix2023fourth}).
    We therefore generate instances, with the \emph{same} networks,
    at harder perturbation radii where the choice of abstract domain is decisive.

    The benchmark uses four convolutional network architectures from existing work~\citep{chiu2025sdp}:
    a ConvSmall trained on MNIST, and a ConvSmall, ConvDeep, and ConvLarge trained on CIFAR-10.
    We use motion-blur structured perturbations from VeriDou~\citep{duong2026verifying2} instead of per-pixel $\ell_\infty$ perturbations, for three reasons.
    First, standard local robustness perturbs each pixel independently inside an $\ell_\infty$ box,
    but real image corruptions such as motion blur, camera shake, and filtering are \emph{structured}:
    each output pixel is a weighted combination of its neighbors, so pixels move together~\citep{duong2026verifying,duong2026verifying2}.
    Second, an $\ell_\infty$ box cannot express this coupling exactly.
    Enclosing a structured perturbation in a pixel box admits images the transformation never produces,
    which leads to spurious counterexamples~\citep{duong2026verifying}.
    Third, structured perturbations expose weaknesses that $\ell_\infty$ noise misses:
    networks that appear highly robust under independent pixel noise are falsified on up to 99\% of instances once convolutional perturbations are added~\citep{duong2026verifying2}.

    VeriS~\citep{duong2026verifying} and VeriDou~\citep{duong2026verifying2} make such perturbations verifiable by standard tools.
    They prepend an affine layer that maps the perturbation parameters to the perturbed image, which turns structural robustness into ordinary local robustness over those parameters.
    We follow VeriDou's universal convolutional parameterization.
    Each entry of a $5\times5$ blur kernel is an independent variable, bounded in $[0, 1/5]$ on the positions covered by blur lines in the angle range and fixed to $0$ elsewhere.
    This covers a continuous range of angles and is strictly more expressive than restricted formulations that interpolate between a few fixed kernels.
    As a result, the \texttt{linf} property is a box over kernel entries rather than over pixels,
    and the halfspace and $\ell_2$ constraints restrict which blur kernels are allowed.
    We sweep the kernel angle over $[0^\circ, \theta_\text{max}]$ for six strengths $\theta_\text{max} \in \{15^\circ, 30^\circ, 60^\circ, 90^\circ, 120^\circ, 150^\circ\}$.

    For each (model, perturbation strength) pair, we evaluate three property types:
    the base $\ell_\infty$ box alone (\texttt{linf}),
    the box intersected with a halfspace constraint (\texttt{linf+hs}),
    and the box intersected with an $\ell_2$ ball constraint (\texttt{linf+l2}).
    For each base perturbation, the halfspace is $\sum x_i \le \sum c_i$ (a hyperplane through the box center)
    and the $\ell_2$ ball is centered at the upper corner $c + r$ with radius $\|r\|_2$,
    so all three property types share the same perturbation region and are directly comparable.
    We sample $3$ correctly-classified test images per combination, each yielding $9$ output disjunctions (one per non-true class),
    giving $4 \times 6 \times 3 \times 3 \times 9 = 1{,}944$ instances in total.

\end{document}